\def\im{\operatorname{Im}}
\newcommand\Ord{{\rm Ord}}
\newcommand\cS{{\mathcal S}}
\def\fb{{c}}
\def\hb{{b}}
\newcommand\cK{{\mathrm k}}
\newcommand\cF{{\mathrm F}}
\newcommand\cR{{\mathcal R}}
\newcommand\cQ{{\mathcal Q}}
\newcommand\cA{{\mathcal A}}
\def\bbbc{{\mathbb C}}
\def\bbbz{{\mathbb Z}}
\def\bbbq{{\mathbb Q}}
\newtheorem{Rem}{Remark}
\newtheorem{Def}{Definition}
\newtheorem{The}{Theorem}
\newtheorem{Pro}{Proposition}
\newtheorem{Lem}{Lemma}
\newtheorem{Ex}{Example}
\newtheorem{Cor}{Corollary}
\begin{document}

\bibliographystyle{unsrt}
\title{PreHamiltonian and Hamiltonian operators for  differential-difference 
equations}
\author{Sylvain Carpentier $^\ddagger$, Alexander V. Mikhailov$^{\star}$ and 
Jing Ping 
Wang$ ^\dagger $
\\
$\ddagger$ Mathematics Department, Columbia University, USA
\\
$\dagger$ School of Mathematics, Statistics \& Actuarial Science, University of 
Kent, UK \\
$\star$ Applied Mathematics Department, University of Leeds, UK
}
\date{}  

\maketitle
\begin{abstract}
In this paper we are developing a theory of rational (pseudo) difference 
Hamiltonian operators, focusing in particular 
on its algebraic aspects. We show that a 
pseudo--difference Hamiltonian operator can be represented as a ratio $AB^{-1}$ 
of two difference operators  with coefficients from a difference field $\cF$ 
where $A$ is preHamiltonian. A difference operator $A$ is called 
preHamiltonian if its image is a Lie subalgebra with respect to the 
Lie bracket of evolutionary vector fields on $\cF$. We show that a skew-symmetric difference operator is 
Hamiltonian if and only if it is preHamiltonian and satisfies simply verifiable 
conditions on its coefficients. We 
show that if $H$ is a rational Hamiltonian operator, then to find a second 
Hamiltonian operator $K$ compatible with $H$ is the same as to find a 
preHamiltonian pair $A$ and $B$ such that $AB^{-1}H$ is skew-symmetric. We 
apply our theory to non-trivial multi-Hamiltonian structures of 
Narita-Itoh-Bogayavlensky and Adler-Postnikov equations.
\end{abstract}

\section{Introduction}

Poisson brackets play a fundamental role in the study of Hamiltonian systems of 
ordinary differential equations. The same holds true for  their infinite 
dimensional analogues, this is to say systems of partial 
differential, or differential--difference, equations. They are particularly important in the theory of 
integrable systems and in deformation quantisation. Zakharov and Faddeev have 
shown that the Korteweg--de-Vries equation can be viewed as a completely 
integrable Hamiltonian system for a Poisson bracket (called the 
Gardner-Zakharov-Faddeev bracket) defined in terms of the Hamiltonian differential operator 
$\frac{d}{dx}$. The concept of Hamiltonian pairs was introduced by  Magri 
\cite{Mag78}. 
Equations which admit two compatible Hamiltonian structures are called 
bi--Hamiltonian. 
\\
In almost all known to us examples of scalar 
differential--difference 
bi-Hamiltonian equations at least 
one of the Hamiltonian operators is  rational. Only the Volterra chain possesses two difference 
Hamiltonian operators (see Examples 1 and 2 in Section 3.1 of this paper). This 
justifies the necessity  to develop a rigorous theory of 
rational Hamiltonian and recursion operators. In our paper \cite{CMW18a} we have 
extended the results obtained in the differential setting \cite{Carp2017} to 
the difference case. In particular, we have shown that rational recursion 
operators generating the symmetries of an integrable differential-difference 
equation must be factorisable as a ratio of two compatible preHamiltonian 
difference operators.  In this paper 
we develop the theory of the so-called preHamiltonian operators, and study their interrelations with rational Hamiltonian operators. We will  illustrate 
our results using Adler-Postnikov integrable
differential-difference equation for which the Hamiltonian structure was not 
know previously \cite{Adler2}.

Let us consider the well-known modified Volterra chain \cite{hirota730, Yami}
\begin{eqnarray}\label{mvol}
u_t=u^2 (u_1-u_{-1}) ,
\end{eqnarray}
where $u$ is a function of a lattice variable $n\in \bbbz$ and continuous time variable $t$. Here
we use the shorthand notations
\begin{eqnarray*}
 u_t=\partial_t(u), \quad u_j=\cS^j u(n,t)=u(n+j,t)
\end{eqnarray*}
and $\cS$ is the shift operator. The right hand side of  the equation \eqref{mvol} lies in the 
difference field $\cF=\bbbc(...,u_{-1},u,u_1,...)$  of rational functions in 
the generators $u_i$, $i \in \bbbz$. 
It possesses a rational recursion operator 
$$R=u^2 \cS+2uu_1+u^2\cS^{-1}+2 u^2(u_1-u_{-1})(\cS-1)^{-1}\frac{1}{u} 
$$
with \((\cS-1)^{-1} \) standing for the  inverse of \(\cS-1\). 
Thus this recursion operator is only defined on \(u\im (\cS-1)\).
The operator $R$ can be written as a ratio of two difference operators
\begin{equation}\label{volab}
R=A B^{-1}, \  \mbox{where $A=u^2 (\cS-\cS^{-1}) u(\cS+1)$ and $B=u (\cS-1 )$}. 
\end{equation}
The pair of difference operator $A$ and $B$ \textit{generates} the hierarchy 
of 
symmetries of the modified Volterra chain. We have shown in \cite{CMW18a} that 
the difference operators $A$ and $B$ must then form a preHamiltonian pair, that is, 
any linear combination $C=A+\lambda B$, $\lambda\in \cK$ satisfies 
\begin{equation*}
 [\im \, C, \im \, C] \subseteq \im \, C,
 \end{equation*}
 where the Lie bracket on $\cF$ is given by $[a,b]=X_a(b)-X_b(a)$ for $a,b \in 
\cF$ and   $X_a=\sum_{n \in 
\bbbz}{S^n(a)\frac{\partial }{\partial u_n}}$ is the evolutionary derivation of 
the field $\cF$ with characteristic function $a$.

 The recursion operator $R$
can also be presented as $R=H_2 H_1^{-1}$ with
\begin{eqnarray}\label{mvalham}
&&H_1=u (\cS-1) (\cS+1)^{-1} u \quad \mbox{and} \quad  H_2=u^2 (\cS-\cS^{-1}) u^2 .
\end{eqnarray}
The operators $H_1$ and $H_2$ are Hamiltonian operators. The difference operator 
$H_2$ is preHamiltonian and  induces a Poisson bracket $\{f,g\}_2=\smallint 
\delta_u (f)H_2\delta_u(g)$  on the space of functionals  
$f,g\in\cF'=\cF/(\cS-1)\cF$, where $\delta_u$ denotes the variational derivative 
with respect to the dependent variable $u$
\[
\delta_u (a)=\sum_{n\in\bbbz}\cS^{-n}\frac{\partial a}{\partial u_n},\qquad 
a\in\cF. 
\]
 The  Hamiltonian operator $H_1$ is 
rational, it can be represented as $H_1=\hat{A}\hat{B}^{-1}$ where $\hat{A}=u 
(\cS-1)$ is preHamiltonian and  $\hat{B}=\frac{1}{u}(\cS+1)$. It induces a 
Poisson bracket on a smaller space $F_{\hat{B}}'=\{  f \in \cF' | \delta_u f \in 
Im \hat{B} \}$. The modified Volterra chain (\ref{mvol}) is a bi-Hamiltonian system for the pair of compatible Hamiltonian operators  $H_1,H_2$
\[
u_t=H_1\ \delta_u u u_1=H_2\ \delta_u \ln u .
\] 
It follows from Theorem 4 in Section 3.2 that the sequence $R^nH$, $n \in 
\mathbb{Z}$, form a
family of compatible  rational
Hamiltonian operators for the system (\ref{mvol}).

The arrangement of this paper is as follows:  In Section \ref{Sec2}, we give a 
short description of
the skew field of rational (pseudo--)difference operators, i.e. operators  
of the form $AB^{-1}$, where $A$ and $B$ are difference operators. We recall the algebraic properties of the noncommutative ring of 
difference operators.
In particular, it is a principal ideal domain, which is right and left 
Euclidean  and satisfies the right (left) Ore property. 
We then define the Fr{\'e}chet derivative of rational operators and introduce 
the notion of bi-difference operators.

The main results are presented in Section \ref{Sec3}. We  explore the 
interrelations between preHamiltonian and Hamiltonian operators. We first look at the difference (pre)Hamiltonian operators in Section \ref{sec31}.
We prove that a Hamiltonian operator is a skew-symmetric preHamiltonian 
operator with simple  conditions on its coefficients:
\begin{quote}
{\it A difference operator $H=\sum_{i=1}^N{h^{(i)}S^{i}-S^{-i}h^{(i)}}$ is Hamiltonian $\iff$ $H$ is preHamiltonian and $h^{(i)}=h^{(i)}(u,...,u_i)$ for all $i=1,...,N$.}
\end{quote}
In Section \ref{sec32} we then generalise the definition of Hamiltonian 
difference operators to  rational (pseudo difference) operators  and 
demonstrate that preHamiltonian pairs provide us with a method to find 
compatible Hamiltonian rational operators to a given (rational) Hamiltonian 
operator. We have shown that
\begin{quote}
{\it A rational Hamiltonian operator and a skew-symmetric rational operator $K$ form a Hamiltonian pair if and 
only if there exists a preHamiltonian pair  of difference operators  $A$ and $B$ 
such that $HK^{-1}=AB^{-1}$.
}
\end{quote}

In Section \ref{Sec5}, 
we apply the theoretical results  to a new integrable equation 
derived by Adler and Postnikov  
\cite{Adler2}:
\begin{equation}
 \label{adpost}
u_t=u^2(u_2u_1-u_{-1}u_{-2})-u(u_1-u_{-1}).
\end{equation}
We show that the equation (\ref{adpost}) is a Hamiltonian system 
$$
u_t=H \delta_u \ln u
$$
with the rational Hamiltonian  operator
\begin{equation*} 
\begin{split}
H&=u^2u_1u_2^2\cS^2-\cS^{-2}u^2u_1u_2^2+\cS^{-1}uu_1(u+u_1)-uu_1(u+u_1)\cS \\
&+u(1-\cS^{-1})(1-uu_1)({\cS}u-u\cS^{-1})^{-1}(1-uu_1)(\cS-1)u\, .
\end{split}
\end{equation*}
In \cite{CMW18a} we have found a rational Nijenhuis recursion operator $R$ for 
the equation (\ref{adpost}). We show that the sequence $R^nH, n \in \bbbz$ forms a
family of compatible  rational
Hamiltonian operators for the system (\ref{adpost}).

\section{Difference and Rational difference operators}\label{Sec2}
In this section, we briefly recall some notations and statements that were 
introduced and discussed in detail in Section 2 of our paper 
\cite{CMW18a}. 
In the end of this section, we prove two lemmas on (bi)difference operators, 
which we are going to use in the next section.
Although we only consider the scalar case in the following, most of our results 
can be generalised to rational matrix operators.

Let $\cK $ be a zero characteristic ground field,  such as 
$\bbbc$  or $\bbbq$. We define the polynomial ring 
$$\mathrm{K}=\cK[\ldots, u _{-1}, u _0, u _1,\ldots]$$  in the infinite set of 
variables 
$\{u\}=\{ u _k;\ k\in\bbbz\}$ and the corresponding field of fractions 
$$\cF=\cK(\ldots, u _{-1}, u _0, u _1,\ldots).$$ Note that every element 
of  
$\mathrm{K}$ and $\cF$ depends on a finite number of variables only. 
%In the notation $a=a(u_{i_1},u_{i_2},\ldots,u_{i_k})$ we shall assume that 
%$i_1<i_2<\cdots<i_k$. 

There is a natural automorphism $\cS$ of the field $\cF$, 
which we call the shift operator, defined as 
\[
 \cS: a( u _k,\ldots , u _r)\mapsto a( u _{k+1},\ldots , u _{r+1}),\quad 
\cS:\alpha\mapsto\alpha, \qquad a( u _k,\ldots , u _r)\in \cF,\ \ \alpha\in\cK.
\]
We often use the shorthand notation 
$
 a_i=\cS^i(a)=a( u _{k+i},\ldots , u _{r+i}),
i\in\bbbz ,
$
and omit the index zero in $a_0$ or $u_0$ when there is no ambiguity.
The field $\cF$  equipped with  the  automorphism $\cS$ is a 
difference field  and 
the ground field $\cK$ is its subfield  of constants.

The partial derivatives $\frac{\partial}{\partial u_i},\ i\in\bbbz$ are 
commuting derivations of $\cF$ satisfying the conditions
\begin{equation}\label{spart}
\cS \frac{\partial}{\partial u_{i}}= \frac{\partial}{\partial u_{i+1}} \cS.
\end{equation}

A derivation of $\cF$ is said to be evolutionary if it commutes with
the shift operator $\cS$. Such a derivation is completely determined 
by one element of $a \in \cF$ and is of the form
\begin{equation}
 \label{Xf}
 X_f=\sum_{i\in\bbbz}\cS^i(a) \frac{\partial}{\partial u_{i}},\qquad a \in\cF.
\end{equation}
The element $a$ is called the characteristic of the evolutionary derivation $X_a$. 
The action of $X_a(ab)$ for  $b\in\cF$ can also be represented in the form
\[
 X_a(b)=b_*[a],
\]
where $b_*[a]$ is the Fr\'echet derivative of $b=b(u_p,\ldots,u_q)$ in the 
direction $a$, which is defined as 
\[
 b_*[a]:= \frac{d}{d\epsilon}b(u_p+\epsilon f_p,\ldots,u_q+\epsilon 
f_q)\arrowvert_{\epsilon=0}=\sum_{i=p}^q\frac{\partial b}{\partial u_i}a_i.
\]
The Fr\'echet derivative of $b=b(u_p,\ldots,u_q)$ is
a difference operator represented by a 
finite sum
\begin{equation}\label{astar}
b_*=\sum_{i=p}^q\frac{\partial b}{\partial u_i}\cS^i.
\end{equation}

Evolutionary derivations form a Lie $\cK$-subalgebra $\cA$ in the the 
 Lie algebra of derivations of the field
$\cF$. Indeed, 
\[\begin{array}{l}
   \alpha X_a+\beta X_b=X_{\alpha a+\beta b},\qquad \alpha,\beta\in\cK,\\
\phantom{}   [X_a,X_b]=X_{[a,b]},
  \end{array}
\]
where $[a,b]\in\cF$ denotes the Lie bracket
\begin{equation}\label{bracket}
 [a,b]=X_a(b)-X_b(a)=b_*[a]-a_*[b]. 
\end{equation}
The bracket (\ref{bracket}) is $\cK$--bilinear, antisymmetric and satisfies 
the Jacobi identity. Thus $\cF$, equipped with the bracket (\ref{bracket}), has 
a structure of  a Lie algebra over $\cK$.

\begin{Def}\label{deford} A difference operator $B$ of order ${\rm ord}\,  
B:=(M,N)$ with 
coefficients in
$\cF$ is a finite sum of the form
\begin{equation}\label{operB}
B= b^{(N)} \cS^{N}+b^{(N-1)} \cS^{N-1}+\cdots +b^{(M)} \cS^{M},\qquad b^{(k)}\in\cF, \ \ M\le 
N,\
\ N,M\in\mathbb{Z}
\end{equation}
where $b_N$ and $b_M$ are non-zero. The term $b^{(N)} \cS^{N}$ is called the leading monomial of $B$.
The total order of $B$ is defined as ${\rm Ord}B=N-M$. The total order of 
the zero 
operator is defined as $\Ord\, 0:=\{\infty\}$.
\end{Def}

The Fr\'echet derivative (\ref{astar}) is an example of a difference operator 
of order $(p,q)$ and total order ${\rm Ord}\, a_* =q-p$. For an element  
$a \in \cF$ the order and total 
order are defined as ${\rm ord}\, a_*$ and   ${\rm Ord}\, a_*$ respectively.

Difference operators form a unital ring $\cR=\cF[\cS,\cS^{-1}]$ of 
Laurent 
polynomials in $\cS$ with coefficients in $\cF$, equipped with the usual addition 
and multiplication defined by 
\begin{equation} \label{smult}
 a\cS^n \cdot b\cS^m=a\cS^n(b)\cS^{n+m}, \hspace{4 mm} a,b \in \cF,  \hspace{4 mm} n,m \in \bbbz.
\end{equation}
This multiplication is associative, but non-commutative. 
The ring $\cR$ is a right and left Euclidean domain and it
satisfies the right (left) Ore property, that is, 
for any $A,B\in \cR$ their exist $A_1,B_1$, not both equal to 
zero, such that $AB_1=BA_1$, (resp. $B_1 A=A_1 B$). In other words, the right (left) ideal
$A \cR \cap B \cR$ (resp. $ \cR A \cap  \cR B$) is nontrivial. Its generator  $M$ has total order
$\Ord A+ \Ord B-\Ord D$, where $D$ is the greatest left (resp. right) common 
divisor of $A$ and $B$.
The domain $\cR$ can be naturally embedded in the skew field of 
rational pseudo--difference operators.

\begin{Def}
 A rational pseudo--difference operator is an element of
 \[
  \cQ=\{ AB^{-1}\,|\, A,B\in \cR,\ B\ne0\}.
 \]
 We shall call them rational operator for simplicity.
\end{Def}

Any rational operator 
$L=AB^{-1}$ can also be written in the form $L=\hat{B}^{-1}\hat A,\ \hat A,\hat 
B\in\cR$ and $\hat B\ne 0$. 
Thus any statement for the representation $L=AB^{-1}$ can be easily 
reformulated to the representation $L=\hat B^{-1}\hat A$. 
In particular, we have shown in \cite{CMW18a} that rational operators
 $\cQ$ form a skew field with respect to usual addition and multiplication.
  The decomposition $L=AB^{-1},\ A,B\in \cR$ of an element $L\in 
\cQ$ is unique if we require that $B$ has a minimal possible total order 
with leading monomial being $1$.

The definition of the total order for difference operators (Definition 
\ref{deford}) can be extended to rational operators:
\begin{equation*}\label{ordF}
 \Ord\, (AB^{-1}):=\Ord\,A-\Ord\, B,\quad A,B \in \cR. 
\end{equation*}

\begin{Def}
A formal adjoint operator $A^\dagger$ for any $A\in\cQ$ can be 
defined 
recursively by the rules:  $a^\dagger=a$ for any $a\in\cF$, 
$\cS^\dagger=\cS^{-1},\ (A+B)^\dagger=A^\dagger+B^\dagger$ 
 and $(A\cdot B)^\dagger=B^\dagger \cdot 
A^\dagger$ for any $A,B\in\cQ$.
In particular $(A^{-1})^\dagger=(A^\dagger)^{-1}$ and 
$(a\cS^n)^\dagger=\cS^{-n}a=a_{-n}\cS^{-n}$. 

A rational operator $A\in \cQ$ is called  
skew-symmetric if $A^{\dagger}=-A$.
\end{Def}

While difference operators act naturally on elements of the field $\cF$,   
rational operators cannot be a priori applied to elements of $\cF$. 
Similarly to the theory of rational differential operators \cite{CDSK2015} for  
$L=AB^{-1} \in \cQ$ and $a,b \in \cF$ we define the {\sl correspondance} $a=Lb$ 
when there exists $c \in \cF$ such that $a=Ac$ and $b=Bc$.

Finally we define the Fr\'echet derivative of difference operators and rational difference operators.
\begin{Def}
The Fr\'echet derivative of a difference operator $B$ (\ref{operB}) in the direction of $a \in \cF$ is defined as
\begin{equation}\label{freopd}
B_*[a]= b^{(N)}_*[a] \cS^{N}+b^{(N-1)}_*[a] \cS^{N-1}+\cdots +b^{(M)}_*[a] \cS^{M} .
\end{equation}
\end{Def}
Here we can also view $B_*$ as a bidifference operator in the sense that, for a 
given $a \in \cF$, both $B_*[\bullet](a)$ and $B_*[a]$ 
are in $\cR$, i.e. difference operators on $\cF$. 
For convenience, we introduce the notation 
 $D_B$ as the following bidifference operator:
\begin{equation}\label{fredop}
(D_B)_a(b)=B_*[b](a) \quad \mbox{for all} \quad a,b \in \cF.
\end{equation}
This definition  can be naturally extended to rational operators: 
$(AB^{-1})_*=A_*B^{-1}-AB^{-1}B_*B^{-1}$.

We complete this  section by proving two lemmas on (bi)difference operators, 
which we are going to use in section $3$. For a bidifference operator $M$ and an element $a \in \cF$ we denote the 
difference operator $M(a, \bullet)$ by $M_a$.
\begin{Lem}\label{CD}
Let $C$ and $D$ be  two difference operators and  $P, Q$ be two bidifference 
operators on $\cF$ such that $CP_a=Q_aD$ for all $a 
\in \cF$. 
Then there exists a bidifference operator $M$ such that $P_a=M_aD$ for all $a \in \cF$. 
\end{Lem}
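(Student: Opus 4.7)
My plan is to reduce the problem to the vanishing of the remainder in a right Euclidean division of $P_a$ by $D$, and then to derive that vanishing from a leading-index argument that uses the bidifference hypothesis in an essential way. First, since $\cR$ is a right Euclidean domain, I would write $P_a = M_a D + R_a$ with $\Ord R_a < \Ord D$ (or $R_a = 0$). Because $D$ has coefficients in $\cF$ not depending on $a$, while those of $P_a$ are $\cF$-linear combinations of the shifts $a_j$, each step of the Euclidean algorithm preserves this linear-in-$a$ structure, so $M$ and $R$ are themselves bidifference operators. Substituting into the hypothesis then gives $CR_a = (Q_a - CM_a)D =: \tilde Q_a D$ with $\tilde Q$ bidifference, reducing the claim to showing $R_a = 0$.

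Next I would decompose, using the bidifference structure,
\[
R_a = \sum_n a_n R^{(n)}, \qquad \tilde Q_a = \sum_m a_m \tilde Q^{(m)},
\]
with $R^{(n)}, \tilde Q^{(m)} \in \cR$ independent of $a$. Writing $C = \sum_p c^{(p)} \cS^p$ and using $\cS^p a_n = a_{n+p}\cS^p$, a direct computation yields $CR_a = \sum_m a_m T^{(m)}$ with $T^{(m)} := \sum_p c^{(p)}\cS^p R^{(m-p)}$. Since the identity $CR_a = \tilde Q_a D$ must hold for \emph{all} $a \in \cF$ and $\cR$ acts faithfully on $\cF$ (a Vandermonde argument in the generators $u_i$), matching $a_m$-components produces the family of identities in $\cR$
\[
T^{(m)} = \tilde Q^{(m)} D, \qquad m \in \bbbz.
\]

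Finally I would close by contradiction: assume $R_a \neq 0$ and let $N_1 := \min\{n : R^{(n)} \neq 0\}$, $p_{\min} := \min\{p : c^{(p)} \neq 0\}$. With $m = N_1 + p_{\min}$, the sum defining $T^{(m)}$ collapses to a single term — contributions with $p < p_{\min}$ vanish by the definition of $p_{\min}$, and those with $p > p_{\min}$ vanish because $m - p < N_1$ forces $R^{(m-p)} = 0$ — leaving
\[
c^{(p_{\min})}\cS^{p_{\min}} R^{(N_1)} = \tilde Q^{(N_1 + p_{\min})} D.
\]
The left-hand side is nonzero (as $\cR$ is a domain) of total order $\Ord R^{(N_1)} \leq \Ord R_a < \Ord D$, while the right-hand side, if nonzero, has total order at least $\Ord D$; this contradicts the equality, so $R_a = 0$ and therefore $P_a = M_a D$. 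The main subtle point is the decoupling in the second paragraph, which relies on the bidifference hypothesis together with the faithfulness of the $\cR$-action on $\cF$; once the family of equations indexed by $m$ is in hand, the leading-index argument is routine.
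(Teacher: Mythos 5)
Your proof is correct and follows essentially the same route as the paper's: Euclidean division of $P_a$ by $D$ with a bidifference remainder, decomposition of the remainder into its $a_j$-components, and a contradiction obtained by isolating an extremal component of $CR_a$, whose total order is forced to be simultaneously $<\Ord D$ and $\geq\Ord D$. The only (immaterial) difference is that you extract the lowest-index term using the trailing monomial of $C$, whereas the paper uses the leading monomial and the highest index; you also spell out the component-matching step in more detail than the paper does.
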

\begin{proof}
There exist two bidifference operators $M$ and $R$ such that 
$$P_a=M_aD+R_a\quad  \mbox{and}\quad \Ord \, R_a< \Ord \, D\quad  \mbox{for all $a \in \cF$}.$$ 
%Indeed, if we write $P_a=\sum_{n=L_P}^{U_P}{P^n(a)S^n}$ and $D=\sum_{k=L_D}^{U_D}{D^k S^k}$ and if $U_P-L_P > \Ord D=U_D-L_D$, 
%then one can define $\tilde P_a=P_a-P^{U_P}(a)S^{U_P-U_D} \frac{1}{D^{U_D}}D$. $\Ord \tilde P_a < U_P-L_P$ for all $a \in \cF$ and we conclude by induction on $U_P-L_P$.
We know that $CP_a=Q_aD$, that is to see $CR_a=N_aD$, where $N_a=Q_a-CM_a$ for all $a \in \cF$. Let us assume that $R_a \neq 0$. There exist difference operators $R^j$ and $N^i$ such that for all $a \in \cF$,
\begin{equation}
R_a =\sum_{j=l}^k {a_j R^j}, \hspace{2 mm}
N_a =\sum_{i=m}^n {a_i N^i}
\end{equation}
In particular $\Ord R^j  < \Ord D$ for all $j=l,...,k$. If $fS^r$ is the leading 
term of $C$ we must have 
\begin{equation}
N^n D=fS^rR^k,
\end{equation}
which implies that $\Ord R^k \geq \Ord D$ contradicting to $\Ord \, R_a< \Ord \, D$.
\end{proof}

\begin{Lem}\label{CDE}
Let $C$, $D$ and $E$ be  non-zero difference operators such that $C+\lambda D$ 
divides $E$ on the right for all $\lambda \in 
\cK$. Then there exists $a \in \cF$  and a difference operator $X$ such that $XC=aXD$ and $E=XD$.
\end{Lem}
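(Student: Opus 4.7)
The plan is to treat $\la$ as a formal central parameter and work in the Ore extension $\widetilde{\cR} := \cF(\la)[\cS,\cS^{-1}]$, then exploit that $\cK$ is infinite to transfer identities in the generic fibre back to every $\la_0 \in \cK$.

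First I perform right Euclidean division of $E$ by $C+\la D$ in $\widetilde{\cR}$, writing $E = Y(\la)(C+\la D) + R(\la)$ with $\Ord R(\la) < \Ord(C+\la D)$, where $Y(\la), R(\la) \in \widetilde{\cR}$ have coefficients rational in $\la$. For each $\la_0 \in \cK$ at which the leading coefficient of $C+\la D$ in $\cS$ does not vanish, specialisation yields a genuine right Euclidean division in $\cR$; by the hypothesis and uniqueness of division in $\cR$, we must have $R(\la_0) = 0$. Only finitely many $\la_0 \in \cK$ are bad, so the rational coefficients of $R(\la)$ vanish at infinitely many points, forcing $R(\la) \equiv 0$ and $E = Y(\la)(C+\la D)$ identically in $\widetilde{\cR}$.

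Next, I factorise $C+\la D = (\la + CD^{-1})D$ in $\cQ(\la)$ to write $Y(\la) = ED^{-1}(\la + CD^{-1})^{-1}$. Setting $X := ED^{-1}$ and $S := CD^{-1}$ in the skew field $\cQ$, the equation becomes $Y(\la)(\la + S) = X$, and the formal Laurent expansion at $\la = \infty$, $Y(\la) = \sum_{k \ge 0}(-1)^k XS^k \la^{-k-1}$, identifies $X$ as the residue of $\la Y(\la)$ at infinity and records the higher moments $XS^k$. The condition $Y(\la_0) \in \cR$ for every $\la_0 \in \cK$ should then force $X \in \cR$ (whence $E = XD$ as an identity in $\cR$) via an examination of the leading $\la^{-1}$ behaviour, combined with the observation that the poles of $Y(\la)$, viewed as a rational function of $\la$ with coefficients in $\cF$, avoid all of $\cK$.

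The claim $XC = aXD$ with $a \in \cF$ is equivalent, after substituting $X = ED^{-1}$, to the conjugation identity $XSX^{-1} = a \in \cF$, i.e., $X$ intertwines $S = CD^{-1}$ with the scalar $a$. This should follow by comparing the successive Laurent moments $XS^k$ and using that they assemble into a rational function of $\la$ which specialises inside $\cR$ at every $\la_0 \in \cK$. The main obstacle is this last step, converting the pointwise hypothesis over $\cK$ into the sharp algebraic claim that $X$ conjugates $CD^{-1}$ to a scalar in $\cF$. I expect to proceed by induction on $\Ord E$: the monomial base case $E = e\cS^r$ is handled by direct computation, which yields $a$ as an appropriate shift of the ratio of the leading coefficients of $C$ and $D$; the inductive step peels off the top monomial of $E$ by right Euclidean division and invokes Lemma~\ref{CD} to propagate the bidifference-type structure arising from the $\la$-dependence of $Y(\la)$.
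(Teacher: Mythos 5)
There is a genuine gap. The first half of your plan --- generic right division by $C+\lambda D$ over $\cF(\lambda)$, vanishing of the remainder because it vanishes at infinitely many $\lambda_0\in\cK$, and the identification of $X=ED^{-1}$ with the $\lambda^{-1}$-coefficient of $Y(\lambda)$ at $\lambda=\infty$ (hence $X\in\cR$ and $E=XD$) --- is workable, modulo checking that the coefficients of $Y(\lambda)$ have no poles at points of $\cK$. But the second half, which is the entire content of the lemma, is not proved: you assert that the relation $XC=aXD$ with a \emph{scalar} $a\in\cF$ ``should follow by comparing the successive Laurent moments'' and that you ``expect to proceed by induction on $\Ord E$'', without supplying the argument. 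What your expansion actually gives is that every moment $XS^k$, $S=CD^{-1}$, lies in $\cR$ with $\cS$-support in a fixed window; from this one can extract a nontrivial left $\cF$-linear relation $\sum_k c_k XS^k=0$, but deducing that $XSX^{-1}$ is an honest element of $\cF$ (equivalently, that the minimal such relation has degree one) is precisely the hard step, and nothing in the proposal addresses it. Moreover, the suggested induction ``peeling off the top monomial of $E$'' does not preserve the hypothesis --- if $C+\lambda D$ right-divides $E$ for all $\lambda$, there is no reason it right-divides $E$ minus its leading monomial --- and Lemma \ref{CD} concerns factoring a bidifference operator through a fixed right factor, which is not the structure present here.

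For comparison, the paper closes exactly this gap by an induction on $\Ord E$ in which the smaller instance is again an instance of the same lemma: writing the left least common multiple $MC=ND$ of $C$ and $D$, one gets $E=GMC=GND$, and manipulation of the quotients $M_\lambda$ shows that $N+\lambda M$ right-divides $GN$ for all $\lambda\in\cK$, with $\Ord (GN)<\Ord E$ once (without loss of generality) $\Ord D>0$; the inductive hypothesis then produces $Y$ and $a\in\cF$ with $YN=aYM$ and $GN=YM$, and $X=YM$ satisfies $XC=YMC=YND=aYMD=aXD$. If you wish to salvage the Laurent-moment approach, you must supply an independent argument that the minimal left $\cF$-relation among the $XS^k$ has degree one; as written, the proposal stops just short of the statement to be proved.
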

\begin{proof}
We first prove the statement  in the case where $\Ord \,C=\Ord \,D=0$. Since $C$ 
and $D$ are invertible difference operators, we can assume that $C=1$ and 
$D=bS^n$. We want to show that if a difference operator $E$ is divisible on the 
right by $1+\lambda b \cS^n$ for all $\lambda \in \cK$, then $n=0$. Assume that 
$n \neq 0$ and define the difference operator $M_{\lambda}$ for $\lambda \in 
\cK$ uniquely by 
\begin{equation} \label{auxe1}
E=M_{\lambda}(1+ \lambda b\cS^n).
\end{equation}
 It is clear since $n \neq 0$ that the coefficients of $M_{\lambda}$ are elements of $\cF[\lambda, \lambda^{-1}]$. In other words, $M_{\lambda}$ is an element of $\cR[\lambda, \lambda^{-1}]$. We get a contradiction looking at \eqref{auxe1} in $\cR[\lambda, \lambda^{-1}]$ since we assumed $E \neq 0$. Hence $n=0$.
\\
We now prove the Lemma in the general case by induction on $\Ord \, E$. If $\Ord \,E=0$ then $\Ord \, (C + \lambda D)=0$ for all $\lambda \in \cK$, which implies that $\Ord \, C= \Ord \, D=0$, which we have treated already. Assume then that $\Ord \, E >0$ and that $C + \lambda D$ divides $E$ on the right for all $ \lambda \in \cK$. Let $MC=ND$ be the left least common multiple (llcm) of $C$ and $D$. Both $C$ and $D$ divide $E$ on the right, hence so does their llcm. Therefore there exists a difference operator $G$ such that $E=GMC=GND$. As earlier we define for all $\lambda \in \cK$ the operator $M_{\lambda}$ by 
\begin{equation} \label{auxe2}
E=M_{\lambda}(C+ \lambda D).
\end{equation}
Substituing $E=GMC$ in \eqref{auxe2} and using the definition of the llcm there exist $P_{\lambda} \in \cR$ for all $\lambda \in \cK$ such that 
\begin{equation} \label{auxe3}
 GM-M_{\lambda}=P_{\lambda}M; \quad
\lambda M_{\lambda}=P_{\lambda}N.
\end{equation}
Similarly there exist $Q_{\lambda} \in \cR$ for all $\lambda \in \cK$ such that 
\begin{equation} \label{auxe4}
 GN-\lambda M_{\lambda}=Q_{\lambda}N; \quad
   M_{\lambda}=Q_{\lambda}M.
\end{equation}
From \eqref{auxe3} and \eqref{auxe4} we can see that for all $\lambda \in \cK$, $G=P_{\lambda}+Q_{\lambda}$  and 
\begin{equation} \label{auxe5}
 GN=Q_{\lambda}(N+\lambda M); \quad
\lambda GM=P_{\lambda}(N+\lambda M).
\end{equation}
If $\Ord \,C=\Ord \, D=0$ we have nothing to prove. Otherwise, without loss of generality, we can assume that $\Ord \,D >0$. Hence $\Ord \, GN <  \Ord \, E$. We see from \eqref{auxe5} that $N+ \lambda M$ divides $GN$ on the right for all $\lambda \in \cK$. By the induction hypothesis, one can find a difference operator $Y$ and an element $a \in \cF$ such that $YN=aYM$ and $GN=YM$. Let $X=YM$. We have $XC=YMC=YND=aYMD=aXD$, which concludes the proof.
\end{proof}

\section{PreHamiltonian and Hamiltonian operators}\label{Sec3}

In  this section, we start by recalling the definitions of preHamiltonian and 
Hamiltonian difference operators. We explain how they relate to each other and 
introduce the class of rational Hamiltonian operators. In particular, we prove 
that given a (rational) Hamiltonian operator $H$, to find a Hamiltonian operator 
compatible to $H$ is the same as to find a preHamiltonian pair $A$ and $B$ 
such that the operator $AB^{-1}H$ is skew-symmetric.

\subsection{Definitions and Interrelations with Examples}\label{sec31}
\begin{Def}
 A difference operator $A$ is called preHamiltonian if $\im A$ is a Lie 
subalgebra, i.e.,
 \begin{equation}\label{preH}
 [\im A,\ \im A]\subseteq \im A 
 \end{equation}
\end{Def}
By direct computation, it is easy to see that  a difference operator $A$ is 
preHamiltonian if and only if there 
exists a 2-form on $\cF$ denoted by $\omega_{A}$ such that (c.f. 
\cite{mr1923781})
\begin{equation}\label{PreH1}
 A_*[A a](b)-A_*[A b](a)=A (\omega_{A}(a,b))\quad \mbox{ for all }\quad a,b \in 
\cF ,
\end{equation}
where $A_*$ denotes the Fr{\'e}chet derivative of the operator $A$.
More precisely, $\omega_A$ is a bidifference operator, i.e. $\omega_A(a,b)$ is 
a combination of terms of the form
$ca_ib_j$ where $c \in \cF$ and $i,j \in \bbbz$.  
Using the notation introduced in (\ref{fredop}), the identity (\ref{PreH1}) is equivalent to 
\begin{equation} \label{preH2}
A_*[Aa]-(D_A)_a A=A \omega_{A}(a, \bullet) \hspace{3 mm} \text{for all} \hspace{3 mm} a \in \cF.
\end{equation}

The preHamiltonian operator $A$ defines a Lie 
algebra on $\cF/\ker A$ with the Lie bracket
\[
 A([a,b]_A)=[Aa, Ab].
\]
The bracket $[a,b]_A$ is anti-symmetric, $\cK$--linear and satisfies the Jacobi 
identity. The latter follows from the fact that $A(\cF)$ is a Lie subalgebra 
with respect to the standard Lie bracket  (\ref{bracket}).

We can construct higher order preHamiltonian operators from known ones 
using the following two lemmas. The first one appeared in \cite{mr1923781} in 
the context of scalar preHamiltonian differential operators of arbitrary 
order.
\begin{Lem}\label{lbn}
Assume that $A$ is a preHamiltonian difference operator. For any difference 
operator $C$, the operator $AC$ is preHamiltonian if and only if
\begin{eqnarray*}
 \delta(a, b)=\omega_A(Ca,Cb)+C_*[AC a](b)-C_*[AC b](a)
\end{eqnarray*}
is in the image of $C$ for all $a,b \in \cF$.
\end{Lem}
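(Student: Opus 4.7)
The plan is to compute the bracket $[ACa, ACb]$ explicitly and show it equals $A(\delta(a,b))$; once this identity is established, both directions of the equivalence fall out of the relation $\im(AC) = A(\im C)$. The key ingredient is the Leibniz rule for the Fréchet derivative of a composition of difference operators, namely
\begin{equation*}
(AC)_*[v] = A_*[v]\cdot C + A\cdot C_*[v], \qquad v\in\cF,
\end{equation*}
which follows directly from the definition \eqref{freopd} (equivalently, from $X_v(AC(b)) = (AC)_*[v](b) + AC(b_*[v])$ together with the chain rule $X_v(C(b)) = C_*[v](b) + C(b_*[v])$).

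First I would apply this Leibniz rule with $v=ACa$ acting on $b$, and symmetrically with $v=ACb$ acting on $a$, obtaining
\begin{equation*}
[ACa, ACb] = A_*[ACa](Cb) - A_*[ACb](Ca) + A\bigl(C_*[ACa](b) - C_*[ACb](a)\bigr).
\end{equation*}
The first two terms are precisely the left-hand side of the preHamiltonian identity \eqref{PreH1} for $A$, evaluated at the pair $(Ca, Cb)$, and so collapse to $A(\omega_A(Ca, Cb))$. Pulling everything under $A$ yields the key formula $[ACa, ACb] = A(\delta(a,b))$.

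The \emph{if} direction is now immediate: if $\delta(a,b) = C(e(a,b))$ for some bidifference expression $e$, then $[ACa, ACb] = AC(e(a,b)) \in \im(AC)$, and $\omega_{AC}:=e$ witnesses preHamiltonian-ness of $AC$. For the converse, preHamiltonian-ness of $AC$ produces some $\omega_{AC}$ with $A(\delta(a,b)) = AC(\omega_{AC}(a,b))$, so that $\delta(a,b) - C(\omega_{AC}(a,b)) \in \ker A$; since $\omega_A$ is itself only determined up to a bidifference operator taking values in $\ker A$, this kernel contribution can be absorbed into the choice of $\omega_A$, producing $\delta(a,b)\in\im C$.

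The main obstacle I expect is the correct handling of the Leibniz rule for $(AC)_*$: one must carefully distinguish composition of difference operators from application to an element, and in particular recognize that $A_*[ACa]\cdot C$ is applied to $b$ by first shifting to $Cb$. A secondary subtlety is the kernel adjustment in the \emph{only if} direction, which relies on the inherent freedom in $\omega_A$; apart from these two points the argument is a direct unwinding of the definitions of preHamiltonian-ness and of $\delta$.
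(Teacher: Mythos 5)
Your argument is correct and follows essentially the same route as the paper: the Leibniz rule for $(AC)_*$ combined with the preHamiltonian identity for $A$ evaluated at the pair $(Ca,Cb)$ gives $(AC)_*[ACa](b)-(AC)_*[ACb](a)=A(\delta(a,b))$, from which both directions follow. Two minor points of precision: the quantity you should equate to $A(\delta(a,b))$ is the operator expression $(AC)_*[ACa](b)-(AC)_*[ACb](a)$ rather than the Lie bracket $[ACa,ACb]$ (they differ by $AC\bigl(b_*[ACa]-a_*[ACb]\bigr)\in\im(AC)$, so the conclusion is unaffected), and in the converse direction the kernel contribution vanishes outright --- a bidifference operator annihilated by the nonzero difference operator $A$ must be zero, since $\ker A$ is finite-dimensional over the constants --- so no re-choice of $\omega_A$ is needed.
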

\begin{proof}
According to (\ref{PreH1}), we compute,  for all $a, b \in \cF$,
\begin{eqnarray*}
  (AC)_*[AC a](b)-(AC)_*[AC b](a)=A \left(\omega_A(Ca,Cb)+C_*[AC a](b)-C_*[AC b](a)\right).
\end{eqnarray*}
Therefore, we only need to check whether $\delta(a, b)$ is in the image of the operator $C$.
\end{proof}
\begin{Rem} \label{inv}
If $A$ is a preHamiltonian operator with associated form  $\omega_A$ and $Q$ is 
a invertible difference operator then $B=AQ$ is also preHamiltonian and the 
previous Lemma provides us with an explicit formula for $\omega_B$
\begin{eqnarray*}
\omega_B(a,b)=Q^{-1}(\omega_A(Qa,Qb)+Q_*[B a](b)-Q_*[B b](a)).
\end{eqnarray*}
\end{Rem}
\begin{Lem}\label{rlcmpreh}
If $A$ and $B$ are preHamiltonian difference operators,  then their right least 
common multiple is also preHamiltonian.
\end{Lem}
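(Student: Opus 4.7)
The plan is to apply Lemma \ref{lbn} to the factorisation $L=AU$, where the right least common multiple $L=AU=BV$ is the generator of the right ideal $A\cR\cap B\cR$. Since $A$ is preHamiltonian with associated two-form $\omega_A$, Lemma \ref{lbn} reduces the task to showing that the bidifference expression
\begin{equation*}
\delta_A(a,b) := \omega_A(Ua,Ub)+U_*[La](b)-U_*[Lb](a)
\end{equation*}
lies in $\im U$ for every $a,b\in\cF$.

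I would first compute the antisymmetrised Fr\'echet derivative $L_*[La](b)-L_*[Lb](a)$ in two different ways. Using the Leibniz rule $L_*[c]=A_*[c]\,U+AU_*[c]$ for the factorisation $L=AU$ and invoking the preHamiltonian identity $A_*[Ac](d)-A_*[Ad](c)=A\omega_A(c,d)$ at $c=Ua$, $d=Ub$, this quantity evaluates to $A\delta_A(a,b)$. A parallel computation using $L=BV$ and the preHamiltonianity of $B$ produces $B\delta_B(a,b)$, where
\begin{equation*}
\delta_B(a,b) := \omega_B(Va,Vb)+V_*[La](b)-V_*[Lb](a).
\end{equation*}
Equating the two expressions yields the key identity $A\delta_A(a,b)=B\delta_B(a,b)$ for all $a,b\in\cF$.

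For each fixed $a\in\cF$, this identity places the element $A(\delta_A)_a=B(\delta_B)_a$ of $\cR$ in the intersection $A\cR\cap B\cR$, which by definition of the right least common multiple equals $L\cR$. Hence there is $R\in\cR$ (depending on $a$) with $A(\delta_A)_a=LR=AUR$; cancelling $A$ in the domain $\cR$ gives $(\delta_A)_a=UR$, and therefore $\delta_A(a,b)=U(Rb)\in\im U$ for every $b\in\cF$. Lemma \ref{lbn} then delivers the preHamiltonianity of $L$.

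The only delicate step is the Leibniz calculation producing $A\delta_A(a,b)$: one must apply the product rule for Fr\'echet derivatives of composed operators and then recognise that evaluating the preHamiltonian identity for $A$ at $(Ua,Ub)$ supplies precisely the $A\omega_A(Ua,Ub)$ summand of $A\delta_A$. Once this identity and its $B$-analogue are in place, the argument concludes mechanically from the universal property of the right least common multiple as generator of $A\cR\cap B\cR$; no recourse to the more elaborate bidifference machinery of Lemmas \ref{CD} and \ref{CDE} is needed, since only the weaker statement $\delta_A(a,b)\in\im U$ for each individual pair $(a,b)$ is required.
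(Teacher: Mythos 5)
Your proof is correct, but it takes a genuinely different route from the paper's. The paper's argument is three lines: writing $M=AD=BC$ for the rlcm, it observes $[\im M,\im M]\subseteq[\im A,\im A]\subseteq \im A$ and likewise $\subseteq\im B$, and then invokes the identity $\im M=\im A\cap\im B$ (Lemma 10 of \cite{CMW18a}) --- an intersection of images inside $\cF$ --- to conclude. You instead work at the level of the ring $\cR$: you run the Leibniz computation from the proof of Lemma \ref{lbn} for both factorisations $L=AU=BV$, equate the results to get $A(\delta_A)_a=B(\delta_B)_a$, and then use the ideal-theoretic characterisation $A\cR\cap B\cR=L\cR$ stated in Section \ref{Sec2}, together with the absence of zero divisors, to cancel $A$ and land $(\delta_A)_a$ in $U\cR$. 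Each step checks out: the product rule $L_*[c]=A_*[c]U+AU_*[c]$ is valid, $(\delta_A)_a$ is indeed a difference operator in its second argument, and the cancellation is legitimate in a domain. What your approach buys is self-containedness --- it avoids the external image-intersection lemma from the companion paper, relying only on the Ore/rlcm facts already quoted in Section \ref{Sec2} --- and as a by-product it exhibits the $2$-form of the rlcm explicitly as $\omega_L(a,\bullet)=R_a$ via Lemma \ref{lbn}. What it costs is length: the paper's argument is purely structural and needs no Fr\'echet-derivative computation at all.
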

\begin{proof} Let $M=AD=BC$ be the right least common multiple (rlcm) of $A$ and $B$.
Then $[\im M,\ \im M]=[\im AD,\ \im AD]\subseteq [\im A,\  \im A] \subseteq \im 
A$ since $A$ is a preHamiltonian operator. Similarly, 
$[\im M,\ \im M] \subseteq \im B$. Moreover we have  $\im M=  \im A \cap \im B$ 
(Lemma $10$ in \cite{CMW18a}). Therefore, $[\im M,\ \im M] \subseteq \im M$.
\end{proof}

Similarly to Hamiltonian operators, in general, a
linear combination of two preHamiltonian operators is no longer preHamiltonian. 
This naturally leads to the following 
definition:
\begin{Def}\label{defpair}
 We say that two difference operators $A$ and $B$ form a preHamiltonian pair if 
$A+\lambda B$ is 
preHamiltonian for all constant $\lambda \in \cK$.
\end{Def}
A preHamiltonian pair $A$ and $B$ implies the existence of 2-forms $\omega_A$, 
$\omega_B$ and $\omega_{A+\lambda 
B}=\omega_A +\lambda \omega_B$. They satisfy
\begin{eqnarray}\label{pair}
A_*[B a](b) +B_*[A a](b)-A_*[B b](a)-B_*[A b](a)=A \omega_B(a,b)+B 
\omega_A(a,b) 
 \  \mbox{ for all }\  a,b \in 
\cF .
\end{eqnarray}
Using the notation introduced by (\ref{fredop}), equation \eqref{pair} is equivalent to
\begin{eqnarray}\label{paireq}
A_*[B a] +B_*[A a]-(D_A)_a B -(D_B)_a A =A \omega_B(a,\bullet)+B \omega_A(a,\bullet) 
 \  \mbox{ for all }\  a\in \cF . 
\end{eqnarray}
\begin{Pro}\label{cpair}
Let $A$ and  $B$ be a preHamiltonian pair.  If there exists an operator $C$ such 
that $AC$ and $BC$ are both preHamiltonian, then they again form a 
preHamiltonian pair.
\end{Pro}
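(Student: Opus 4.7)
The plan is to apply Lemma \ref{lbn} to the operator $(A+\lambda B)C$ for an arbitrary $\lambda \in \cK$ and observe that the resulting criterion splits linearly into the conditions already supplied by $AC$ and $BC$ being preHamiltonian.

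First I would fix $\lambda \in \cK$ and write down the bidifference operator $\delta_\lambda$ whose membership in $\im C$ controls, by Lemma \ref{lbn}, the preHamiltonian property of $(A+\lambda B)C$. Since $A$ and $B$ form a preHamiltonian pair, the associated 2-form of $A+\lambda B$ decomposes as $\omega_{A+\lambda B} = \omega_A + \lambda \omega_B$, and the Fréchet derivative is $\cK$-linear in its subscript operator. Substituting into the formula of Lemma \ref{lbn} gives
\begin{align*}
\delta_\lambda(a,b) &= \omega_{A+\lambda B}(Ca,Cb) + C_*[(A+\lambda B)Ca](b) - C_*[(A+\lambda B)Cb](a) \\
&= \delta^A(a,b) + \lambda\,\delta^B(a,b),
\end{align*}
where $\delta^A(a,b) = \omega_A(Ca,Cb) + C_*[ACa](b) - C_*[ACb](a)$ and $\delta^B(a,b) = \omega_B(Ca,Cb) + C_*[BCa](b) - C_*[BCb](a)$ are exactly the bidifference operators that Lemma \ref{lbn} attaches respectively to the pairs $(A,C)$ and $(B,C)$.

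The hypothesis that $AC$ and $BC$ are both preHamiltonian, combined with the ``only if'' direction of Lemma \ref{lbn}, yields $\delta^A(a,b) \in \im C$ and $\delta^B(a,b) \in \im C$ for all $a,b \in \cF$. Hence $\delta_\lambda(a,b) \in \im C$ for every $\lambda \in \cK$ and every $a,b \in \cF$, and the ``if'' direction of Lemma \ref{lbn}, applied now with $A+\lambda B$ in place of $A$, shows that $(A+\lambda B)C = AC + \lambda BC$ is preHamiltonian for every $\lambda$. By Definition \ref{defpair}, this is exactly the preHamiltonian pair condition for $AC$ and $BC$.

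The only mild subtlety is that the 2-form $\omega_M$ of a preHamiltonian operator $M$ is determined only modulo $\ker M$, so one must use \emph{consistent} representatives throughout. This is automatic here: the preHamiltonian pair hypothesis supplies specific $\omega_A, \omega_B$ with $\omega_A + \lambda \omega_B$ serving as $\omega_{A+\lambda B}$, and the same representatives are then used in $\delta^A, \delta^B$. After this fixing, the argument is essentially linearity bookkeeping and I do not expect a serious obstacle.
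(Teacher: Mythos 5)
Your proof is correct and follows essentially the same route as the paper: both arguments rest on Lemma \ref{lbn} applied with the fixed representatives $\omega_A,\omega_B$ (and $\omega_{A+\lambda B}=\omega_A+\lambda\omega_B$) together with linearity in $\lambda$. The paper merely packages the $\lambda$-linear coefficient as the explicit cross-identity \eqref{pair} for $AC$ and $BC$ with the 2-forms $N$ and $M$ supplied by Lemma \ref{lbn}, which is equivalent to your verification that $(A+\lambda B)C$ is preHamiltonian for every $\lambda$.
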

\begin{proof}
Let $\omega_A$ and $\omega_B$ be the $2$-form  associated to preHamiltonian 
operators $A$ and $B$, that is,
\begin{eqnarray*}
A_*[Aa]&=(D_A)_aA+A \omega_A(a, \bullet), \qquad B_*[Ba]&=(D_B)_a B+B \omega_B(a, \bullet)
\end{eqnarray*}
for all $a \in \cF$. The forms $\omega_A$ and  $\omega_B$ satisfy (\ref{pair}) 
since $A$ and $B$ form a Hamiltonian pair.
According to Lemma \ref{lbn}, we know that there  exist two bidifference 
operators $M$ and $N$ such that for all $a, b \in \cF$
\begin{eqnarray*}
&&\omega_A(Ca,Cb)+C_*[AC a](b)-C_*[AC b](a)=C M(a, b);\\
&&\omega_B(Ca,Cb)+C_*[BC a](b)-C_*[BC b](a)=C N(a, b).
\end{eqnarray*}
Substituting them into (\ref{pair}) for $Ca$ and $Cb$, we get 
\begin{equation*}
(AC)_*[BC a](b) +(BC)_*[AC a](b)-(AC)_*[BC b](a)-(BC)_*[AC b](a)=AC N(a,b)+BC M(a,b),
\end{equation*}
which implies that $AC$ and $BC$ for a preHamiltonian pair.
\end{proof}

Before we move on to justify the terminology  \textit{preHamiltonian}, we first 
recall the definition of a Hamiltonian difference operator. 

For any element $a\in \cF$, we define an equivalent class (or a functional) 
$\int\! a$
by saying that two elements $a,b\in\cF$ are equivalent if \(a-b\in
\mbox{Im}(\cS-1)\).  The space of functionals is denoted by $\cF'$.
For any functional $\int\!\!f\in \cF'$ (simply written $f\in \cF'$ without 
confusion), we define its difference variational derivative (Euler operator) 
denoted by
$\delta_{u} f \in \cF$ (here we identify the dual space with itself) as
$$\delta_{u} f=\sum_{i\in\bbbz} \cS^{-i}  \frac{\partial f}{\partial u_{i}}=
\frac{\partial }{\partial u}\left(\sum_{i\in\bbbz} \cS^{-i}  f \right).$$
\begin{Def} \label{locham}
A difference operator $H$ is Hamiltonian if the bracket 
\begin{equation} \label{lieham}
\{ f, g \}_H := \smallint \delta_u f \cdot H(\delta_u g)
\end{equation}
defines a Lie bracket on $\cF'$.
\end{Def}

As in the differential case \cite{mr94j:58081} this definition can be 
re-cast  purely in terms of operators acting on the difference field $\cF$ and 
avoiding computations on the quotient space $\cF'$ of functionals. 
\begin{The}\label{algham}
A difference operator $H$ is Hamiltonian if and only if $H$ is skew-symmetric 
and  
\begin{eqnarray}\label{lochbis}
 H_*[Ha]-(D_H)_aH=H {(D_H)_a}^{\dagger} \text{  for all } a \in \cF,
\end{eqnarray}
where ${(D_H)_a}^{\dagger}$ is the adjoint operator of $(D_H)_a$ defined in (\ref{fredop}).
\end{The}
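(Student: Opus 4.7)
The plan is to prove both directions by translating the Lie bracket axioms for $\{\,,\,\}_H$ on $\cF'$ into operator identities on $\cF$, relying on two standard tools: the variational duality $\smallint a\,\delta_u F = \smallint F_*[a]$, and the Helmholtz criterion, which asserts that $(\delta_u F)_*$ is self-adjoint for every functional $F$.

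\textbf{Translating skew-symmetry.} The identity
\[
\{f,g\}_H + \{g,f\}_H \;=\; \smallint (\delta_u f)(H + H^\dagger)(\delta_u g)
\]
vanishes on all functionals precisely when $H + H^\dagger = 0$, since the image of $\delta_u$ contains enough elements of $\cF$ (for instance $u^k = \delta_u \smallint u^{k+1}/(k+1)$) to make the pairing $(\alpha,\beta)\mapsto\smallint \alpha(H+H^\dagger)\beta$ non-degenerate on skew-adjoint operators.

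\textbf{Translating the Jacobi identity.} Write $\alpha = \delta_u f$, $\beta = \delta_u g$, $\gamma = \delta_u h$. Applying the product rule to $\smallint \beta H\gamma$ and integrating by parts yields, after using $H^\dagger=-H$ and the Helmholtz self-adjointness $\beta_*=\beta_*^\dagger$, $\gamma_*=\gamma_*^\dagger$,
\[
\delta_u \smallint \beta H\gamma \;=\; \beta_*[H\gamma] - \gamma_*[H\beta] + (D_H)_\gamma^\dagger[\beta].
\]
Substituting this into $J := \{f,\{g,h\}_H\}_H + \mathrm{cyc.}$ produces nine terms: six "star" terms involving $\alpha_*, \beta_*, \gamma_*$, and three "$D_H$" terms. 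The six star terms cancel pairwise by self-adjointness; for instance the two $\beta_*$-terms contribute
\[
-\smallint H\alpha\cdot\beta_*[H\gamma] + \smallint H\gamma\cdot\beta_*[H\alpha] \;=\; 0,
\]
and similarly for $\alpha_*$ and $\gamma_*$. What remains is
\[
J \;=\; \smallint \alpha H (D_H)_\gamma^\dagger[\beta] + \smallint \beta H (D_H)_\alpha^\dagger[\gamma] + \smallint \gamma H (D_H)_\beta^\dagger[\alpha].
\]
Moving $H$ across each pairing by $H^\dagger = -H$ and unfolding $(D_H)_c[x] = H_*[x](c)$, this cyclic sum becomes the pairing of the bidifference operator
\[
L_a \;:=\; H_*[Ha] - (D_H)_a H - H (D_H)_a^\dagger
\]
against variational-derivative triples, cyclically symmetrised.

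\textbf{Main obstacle and conclusion.} The delicate step is to upgrade "$J = 0$ for all variational-derivative triples $(\alpha,\beta,\gamma)$" to "$L_a = 0$ as an operator on $\cF$ for every $a\in\cF$." This is resolved by observing that $L_a$ is $\cK$-linear in $a$, hence corresponds to a bidifference operator whose coefficients are finite combinations of derivatives of the $h^{(i)}$; by independently varying $\alpha, \beta, \gamma$ among monomial-type variational derivatives ($u, u^2, u^3, \ldots$), one extracts the vanishing of each such coefficient. The converse direction is obtained by reading the same chain of equalities in reverse: skew-symmetry together with \eqref{lochbis} forces $J \equiv 0$, so $\{\,,\,\}_H$ is a Lie bracket on $\cF'$. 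The main calculational burden is the bookkeeping of adjoints in the bidifference operator $D_H$; the conceptual cleanliness of the $D_H$-notation introduced in \eqref{fredop} is exactly what makes the star-term cancellation transparent.
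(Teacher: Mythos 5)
Your strategy is essentially the paper's: establish non-degeneracy of the pairing $\smallint a\,\delta_u f$, identify antisymmetry of the bracket with $H^\dagger=-H$, expand the Jacobi identity via $\delta_u\smallint\beta\, H\gamma=\beta_*[H\gamma]-\gamma_*[H\beta]+(D_H)_\gamma^\dagger[\beta]$, cancel the ``star'' terms by Helmholtz self-adjointness, and extend the resulting operator identity from variational derivatives to all of $\cF$. Two of your steps, however, have genuine gaps. The first concerns non-degeneracy: testing only against $u^k=\delta_u\smallint u^{k+1}/(k+1)$ gives elements of $\im\,\delta_u$ depending on $u_0$ alone, and it is not argued (nor obvious) that $\smallint\alpha\,P\beta=0$ for all $\alpha,\beta\in\cK[u_0]$ forces the self-adjoint operator $P=H+H^\dagger$ to vanish. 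The paper instead tests against $f=uu_k$, $k\in\bbbz$, whose variational derivatives $u_k+u_{-k}$ probe every shift, and then differentiates with respect to $u_{n-k}$ for large $k$ to kill each coefficient; some argument of this kind is required.

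The more serious gap is your final step. After the star-term cancellation you are left with a \emph{cyclic sum} $J=\smallint\alpha H(D_H)_\gamma^\dagger[\beta]+\mathrm{cyc.}$, and you propose to conclude $L_a=0$ by varying $\alpha,\beta,\gamma$ independently over monomials. But the vanishing of a cyclically symmetrised trilinear form does not imply the vanishing of the underlying bilinear operator --- it only gives you its symmetrised part. To close the argument you must either prove that the trivector $\smallint\gamma\, L_\alpha(\beta)$ is totally antisymmetric for skew-symmetric $H$, or do what the paper does: integrate by parts so that all three Jacobi terms are collected against the \emph{single} variable $c$ (the two $c_*$ cross-terms cancel precisely because $c_*=c_*^\dagger$), apply non-degeneracy in $c$ alone to obtain the bilinear identity $H_*[Hb](a)-H_*[Ha](b)=H(D_H)_b^\dagger(a)$ for $a,b\in\delta_u\cF$, and then invoke $(D_H)_a^\dagger(b)=-(D_H)_b^\dagger(a)$ --- itself a consequence of $H^\dagger=-H$ --- to recognise this as \eqref{lochbis}. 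Once that is done, the extension from $\delta_u\cF\times\delta_u\cF$ to $\cF\times\cF$ is immediate, since a nonzero (bi)difference operator cannot annihilate a subspace that is infinite-dimensional over the constants; your Vandermonde-type extraction is a workable substitute for this last point, but only after the pointwise (unsymmetrised) identity on $\delta_u\cF$ has been secured.
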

\begin{proof}
We first prove the following: if $a \in \cF$ is such that $\int a\cdot \delta_u f =0$ for all $ f \in \cF'$, then $a=0$.
Since $(\cS-1) \cF \subset \ker  \, \delta_u$, we have $\delta_u (a\cdot \delta_u f)=0$ for all $ f \in \cF'$. In particular we can consider $f=uu_k$ for $k \in \bbbz$. Let $(m,p)$ be the order of $a$. We have for all $k \geq 0$,
\begin{equation}\label{eqa}
a_k+a_{-k}+\sum_{n=m}^p{(u_{k-n}+u_{-k-n})S^{-n}(\frac{\partial a}{\partial u_n})}=0.
\end{equation}
For a given $n$ and for $k$ large enough in \eqref{eqa}, after applying $\frac{\partial}{\partial u_{n-k}}$ we get
\begin{equation} \label{eqb}
S^k(\frac{\partial a}{\partial u_{-n}})=S^{-n}(\frac{\partial a}{\partial u_n})
\end{equation}
Since \eqref{eqb} holds for all $k$ large enough, we deduce that $\frac{\partial a}{\partial u_n}=0$. Hence $a=0$.

The anti-symmetry of \eqref{lieham} is equivalent to the skew-symmetry of the operator $H$. Indeed, \eqref{lieham} is anti-symmetric if and only if 
\begin{equation}\label{eqc}
\smallint \delta_u f \cdot (H+H^{\dagger})(\delta_u g)=0 \text{  for all  }  f,g \in \cF'.
\end{equation}
From what we just proved, this is equivalent to say that $(H+H^{\dagger})(\delta_u f)=0$ for all $ f \in \cF'$, hence that $H+H^{\dagger}=0$ since nonzero difference operators have finite dimensional kernel 
over the constants.

Finally we look at the Jacobi identity. For this, we take $a=\delta_u f, b=\delta_u g$ and $c=\delta_u h$, where $f, g, h \in \cF'$. Note that
\begin{equation}\label{jac1}
\begin{split}
\{  f, \{  g,  h\}_H \}_H&=\smallint  a\cdot H(\delta_u (  b\cdot H c))
  =-\smallint H a\cdot \delta_u ( b\cdot H c)
  =\smallint H a\cdot \delta_u (  c\cdot H b)\\
  &=\smallint ( c \cdot  H b)_*[H a]
 =\smallint   c \cdot (H b)_*[H a]+ \smallint H b\cdot  c_*[H a].
\end{split}
\end{equation}
Similarly,  we have 
\begin{equation}\label{jac2}
\begin{split}
\{  g, \{  h,  f\}_H \}_H&=\!\!-\{  g, \{ f,  h\}_H \}_H=\!\!-\smallint   c\cdot 
(H  a)_*[H b]- \smallint H a\cdot  c_*[H b] .
\end{split}
\end{equation}
As for the third term, we simply write
\begin{equation}\label{jac3}
\{  h, \{ f,  g\}_H \}_H=\smallint  c\cdot H(\delta_u ( a\cdot H b)).
\end{equation}
Since $c_*=c_*^\dagger$, this leads to
\begin{eqnarray*}
&&\{  f, \{  g,  h\}_H \}_H+\{  g, \{  h,  f\}_H \}_H+\{  h, \{ f,  g\}_H \}_H\nonumber\\
&=&\smallint   c \cdot \left((H b)_*[H a]- (H  a)_*[H b]+H (a_*^{\dagger}(Hb)+ (H b)_*^{\dagger}(a) ) \right)=0, 
\end{eqnarray*}
which itself is equivalent to
\begin{equation}\label{jaceq}
[Hb, Ha]=(H  a)_*[H b]-(H b)_*[H a]=H (a_*^{\dagger}(Hb)+ (H b)_*^{\dagger}(a)).
\end{equation}
Using the notation introduced in (\ref{fredop}), we have $(Hb)_*=(D_H)_b+Hb_*$ for all $b \in \cF$, which leads to
$(Hb)_*^{\dagger}=(D_H)_b^{\dagger}+b_*^{\dagger}H^{\dagger}=(D_H)_b^{\dagger}-b_*^{\dagger}H$. 
Since $a_*$ and $b_*$ are self-adjoint, we can write 
$$a_*^{\dagger}(Hb)+ (H b)_*^{\dagger}(a)=a_*(Hb)-b_*(Ha)+(D_H)_b^{\dagger}(a).$$
Moreover, $(Ha)_*[Hb]=H_*[Hb](a)+H(a_*[Hb])=(D_H)_aHb+H(a_*[Hb])$. Therefore from \eqref{jaceq} we deduce that \eqref{lochbis} holds on $\delta_u \cF \times \delta_u \cF$. 
We proved that equation \eqref{jaceq} holds for any $(a,b) \in \cF \times \cF$ since it is enough to check that it holds for any $(a,b) \in V \times V$, where $V$ is a subspace of $\cF$ 
infinite dimensional over the constants, and $V=\delta_u \cF$ provides us with such a subspace. 
\end{proof}

This theorem immediately implies that a Hamiltonian operator $H$ is preHamiltonian with 
\begin{eqnarray}\label{omegaH}
\omega_H(a, b)=(D_H)_a^{\dagger}(b). 
\end{eqnarray}
Note that the skew-symmetry of operator $H$ is a necessary condition since $\omega_H$ is a $2$-form.
This can be used as a criteria to determine whether an operator is Hamiltonian. 
Using formula (\ref{omegaH}), we have the following result for scalar difference operators:
\begin{The} \label{charham}
A skew-symmetric operator $H=\sum_{i=1}^k \left(h^{(i)} \cS^i-\cS^{-i} h^{(i)}\right)$ of total order $2k$  $(k>0)$ is Hamiltonian if and only if it is preHamiltonian and its coefficients $h^{(i)}$ only depend on $u,...,u_{i}$ for all $i=1,...,k$.
\end{The}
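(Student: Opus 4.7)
The plan is to use Theorem~\ref{algham} as a bridge. By that theorem, $H$ is Hamiltonian if and only if $H$ is skew-symmetric and the operator identity \eqref{lochbis} holds. Combined with the observation \eqref{omegaH}, this means that Hamiltonianness of $H$ is equivalent to preHamiltonianness with the canonical 2-form $\omega_H(a,b)=(D_H)_a^\dagger(b)$. The theorem then amounts to showing that, for a skew-symmetric preHamiltonian $H$, the specific identity $\omega_H=(D_H)^\dagger$ is equivalent to the coefficient dependence condition.

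For the ``$\Rightarrow$'' direction, I would examine the top-shift coefficient of \eqref{lochbis}. Writing $H=\sum_{i=-k}^{k}c_i\cS^i$ with $c_i=h^{(i)}$ for $i>0$ and $c_{-i}=-\cS^{-i}(h^{(i)})$, let $P$ be the largest index such that some $c_i$ depends on $u_P$, and $Q$ the smallest. The operator $(D_H)_a H$ has leading shift $\cS^{P+k}$ whose coefficient is a linear form in $\{a_i\}_{i\in[-k,k]}$, while $H(D_H)_a^\dagger$ has leading shift $\cS^{k-Q}$ whose coefficient is a linear form in $\{a_i\}_{i\in[P,\,P+2k]}$. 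Matching leading shifts forces $P=-Q$; moreover, if $P>k$ the two index ranges become disjoint, so the equality of the two linear forms implies $\partial c_i/\partial u_P=0$ for all $i$, contradicting the maximality of $P$. Hence $P\le k$ and $Q\ge -k$. The finer statement that each $h^{(i)}$ depends only on $u_0,\dots,u_i$ then follows by iterating the same top-shift analysis at successively lower shifts (or by induction on $k$), using the explicit skew-symmetric form $c_{-i}=-\cS^{-i}(h^{(i)})$ to propagate information from the extreme shifts inward.

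For the ``$\Leftarrow$'' direction, preHamiltonianness yields a bidifference operator $\omega_H$ with $H_*[Ha]-(D_H)_a H=H\,\omega_H(a,\bullet)$, and the task is to prove $\omega_H(a,b)=(D_H)_a^\dagger(b)$, from which Theorem~\ref{algham} delivers Hamiltonianness. Under the dependence conditions, $(D_H)_a^\dagger$ has the same shift range $[-k,k]$ as $\omega_H$, so the bidifference operator $\mu(a,b):=\omega_H(a,b)-(D_H)_a^\dagger(b)$ satisfies $H\mu(a,\bullet)=0$ as an operator identity. I expect the hard part of the argument to be showing $\mu\equiv 0$: one has to rule out nonzero bidifference operators whose values lie in the finite-dimensional kernel of $H$. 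This should follow from a shift-by-shift comparison exploiting the order bounds coming from the coefficient dependence, together with the skew-symmetric form of $H$ to close the argument.
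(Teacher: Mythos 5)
Your ``$\Rightarrow$'' direction is essentially the paper's argument: one analyses the identity \eqref{lochbis} as an equality of bidifference operators and compares shift ranges. The paper's version is sharper in one respect that your sketch glosses over: to get the per-coefficient statement (each $h^{(i)}$ depends only on $u,\dots,u_i$, not merely the global bounds $P\le k$, $Q\ge -k$) one uses that for a fixed $n$ the terms in $(D_H)_aH$ carrying $a_n$ involve only the single coefficient $h^{(|n|)}$, so there is no cross-cancellation and the top- and bottom-shift argument can be run separately on $a_i h^{(i)}_*H$ and $a_{-i}\cS^{-i}h^{(i)}_*H$; your ``iterate at successively lower shifts'' would need exactly this observation to go through, since for shifts inside $[-k,k]$ the left-hand side $H_*[Ha]$ no longer vanishes.

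The ``$\Leftarrow$'' direction has a genuine gap. You assert that $H\mu(a,\bullet)=0$ ``because $(D_H)_a^\dagger$ has the same shift range as $\omega_H$''; two operators having the same shift range does not make $H$ times their difference vanish, and in fact $H\mu(a,\bullet)=H_*[Ha]-(D_H)_aH-H(D_H)_a^\dagger=:P_a$, so the assertion $H\mu(a,\bullet)=0$ \emph{is} the identity \eqref{lochbis} you are trying to establish --- you have assumed the conclusion. (Conversely, the step you flag as ``the hard part,'' deducing $\mu\equiv0$ from $H\mu(a,\bullet)=0$, is immediate because $\cR$ is a domain; no kernel analysis is needed.) The missing idea is the paper's: $P_a$ is skew-symmetric for every $a$, and preHamiltonianness says $H$ divides $P_a$ on the left; skew-symmetry of both $P_a$ and $H$ then forces $H$ to divide $P_a$ on the right as well, so by Lemma \ref{CD} one gets $P_a=HQ_aH$ with $Q_a$ skew-symmetric. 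The coefficient-dependence hypothesis bounds $\Ord P_a\le 4k$, while a nonzero skew-symmetric $Q_a$ has total order at least $2$, which would give $\Ord\,HQ_aH\ge 4k+2$. Hence $Q=0$ and $P_a=0$. Without this double-divisibility and total-order count, your argument does not close.
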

\begin{proof}
 First we assume that $H$ is a Hamiltonian operator, and show that its coefficients $h^{(i)}$ only depend on $u,...,u_i$. 
It follows that $H$ satisfies (\ref{lochbis}), that is,
\begin{eqnarray}\label{locheq2}
 H_*[Ha]=(D_H)_aH+H {(D_H)_a}^{\dagger} \text{  for all } a \in \cF.
\end{eqnarray}
This identity is an equality between bidifference operators, that is between summands of the form $ba_n\cS^m$ for $b \in \cF$ and $n,m \in \bbbz$.
The left hand side of \eqref{locheq2} is a difference operator in $\cS$ of order $(-k,k)$, or in other words a sum of terms of the form $ba_n\cS^m$ with $|m| \leq k$.  Hence so must be the right hand side of \eqref{locheq2} (RHS).  We can rewrite the RHS as 
\begin{equation}\label{locheq3}
\sum_{i=1}^k{(a_i-a_{-i}\cS^{-i}){h^{(i)}}_*H}-\sum_{i=1}^k{H{h^{(i)}}_*^{\dagger}(a\cS^{i}-a_i)}
\end{equation}
In the second term of \eqref{locheq3}, it is clear that every summand $ba_n\cS^m$ is such that $|m-n| \leq k$. 
Combining this remark with the fact that as a difference operator in $\cS$ \eqref{locheq3} has order $(-k,k)$, we deduce that any subterm $ba_n\cS^m$ appearing in the first term of \eqref{locheq3} must be such that $|m| \leq k$ or $|m-n| \leq k$. Therefore, given $i$ such that $1 \leq i \leq k$, as a difference operator in $\cS$, $a_{-i}S^{-i}{h^{(i)}}_*H$ cannot involve powers of $\cS$ below $S^{-i-k}$. This implies that ${h^{(i)}}_*$ does not depend on negative powers of $\cS$ (recall that $H$ has order $(-k,k)$). Similarly, the operator $a_i{h^{(i)}}_*H$ cannot involve powers of $\cS$ strictly bigger than $\cS^{k+i}$, which implies that ${h^{(i)}}_*$ can only depend on $1,..., \cS^{i}$. 

Conversely, we need to show that a skew-symmetric preHamiltonian operator $H$ such that all its coefficients $h^{(i)}$ depend only on $u, ...,u_i$ is Hamiltonian. For any $a \in \cF$, we write
$$P_a=H_*[Ha]-(D_H)_aH-H{(D_H)_a}^{\dagger}.$$ 
We want to prove that $P_a$ is identically $0$. 
Under the assumption, we have that $P_a$ is skew-symmetric and its total order is at most $4k$. 
We also know since $H$ is preHamiltonian that $H$ divides $P_a$ on the left for all $a \in \cF$. Of course $H$ must also divide $P_a$ on the right since $P_a$ and $H$ are both skew-symmetric. 
Therefore by Lemma \ref{CD} there exists $Q$ bidifference operator such that $P_a=HQ_aH$ for all $a \in \cF$. Moreover, $Q_a$ is skew-symmetric, hence its total order is at least $2$ if it is non-zero. 
Therefore $Q=0$.
\end{proof}
A recent  classification of low order scalar Hamiltonian 
operators in the   framework of multiplicative Poisson 
$\lambda$-brackets \cite{DKVW1}   is consistent with 
this theorem.
\begin{Ex}\label{ex1}
Consider the well-known Hamiltonian operator $H=u(\cS-\cS^{-1})u=u u_1 \cS-\cS^{-1} u u_1$ of the Volterra equation $u_t=u (u_1-u_{-1})$.
Obviously, $H$ is skew-symmetric and its  coefficient $h^{(1)}=uu_1$  only depends on $ u, u_1$. To conclude that it is indeed Hamiltonian using Theorem \ref{charham},
one needs to check that $H$ is preHamiltonian. Indeed, for all $a, b \in \cF$:
\begin{equation}
H_*[H a](b)-H_*[H b](a)=H\left( \frac{1}{u}(b H(a)-aH(b))\right).
\end{equation}
\end{Ex}

\begin{Ex}\label{ex2}
We can do the same for the second Hamiltonian operator of the Volterra equation 
\begin{eqnarray*}
&&K=u(\cS u\cS+u\cS+\cS u-u\cS^{-1}-\cS^{-1}u-\cS^{-1}u\cS^{-1})u\\
&&\quad= u u_1 u_2\cS^2+(u^2u_1+u u_1^2)\cS-\cS^{-1} (u^2u_1+u u_1^2)-\cS^{-2} u u_1 u_2 .
\end{eqnarray*}
Note that it is skew-symmetric and its coefficients $h^{(1)}=u^2u_1+u u_1^2$ depending on $u, u_1$ and $h^{(2)}=u u_1 u_2$ depending on $u, u_1 u_2$. 
To check that $K$ is preHamiltonian, we denote $A=K \frac{1}{u}$ and it follows from
\begin{equation*}
A_*[A a](b)-A_*[A b](a)=A(u(a_1b_{-1}+a_1b+ab_{-1}-a_{-1}b-ab_1-a_{-1}b_1)) \quad \mbox{for all} \quad a,b \in \cF.
\end{equation*}
\end{Ex}

In the same manner, we can use (\ref{omegaH}) to 
determine a Hamiltonian pair. 
The operators $H$ and $K$ form a Hamiltonian pair if and only if $$\omega_H(a, b)=(D_H)_a^{\dagger}(b),
\omega_K(a, b)=(D_H)_a^{\dagger}(b)\mbox{ and $\omega_{H+\lambda K}(a, b)=(D_{H+\lambda K})_a^{\dagger}(b)$
for all $a, b\in \cF$.}$$
Moreover, we are able to prove the statement on the relation between perHamiltonian and Hamiltonian pairs.
\begin{The}\label{repair}
Let $A$ and  $B$ be a preHamiltonian pair.  Assume that there exists a 
difference operator $C$ such that $AC$ is 
skew-symmetric and $BC$ is Hamiltonian. Then $AC$ is also Hamiltonian and
forms a Hamiltonian pair with $BC$.
\end{The}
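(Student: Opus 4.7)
The plan is to show that $(A+\lambda B)C = AC + \lambda BC$ is a Hamiltonian difference operator for every $\lambda \in \cK$; this yields $AC$ Hamiltonian (at $\lambda = 0$) and the Hamiltonian pair property (by varying $\lambda$). Since $AC$ and $BC$ are both skew-symmetric, so is $(A+\lambda B)C$, and by Theorem \ref{algham} its Hamiltonicity reduces to verifying $\mathcal{J}_H(a) := H_*[Ha] - (D_H)_a H - H(D_H)_a^\dagger = 0$ for $H = (A+\lambda B) C$ and every $a \in \cF$.

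I would first expand $\mathcal{J}_{(A+\lambda B)C}(a)$ using the product rule $(PQ)_*[c] = P_*[c]\,Q + P\,Q_*[c]$ together with the preHamiltonicity of $A+\lambda B$ (with 2-form $\omega_A + \lambda\omega_B$). A direct computation yields
\begin{equation*}
\mathcal{J}_{(A+\lambda B) C}(a) = (A + \lambda B)\cdot\bigl(X(a) + \lambda Y(a)\bigr),
\end{equation*}
where $X(a) := \omega_A(Ca,\bullet)\,C + C_*[ACa] - (D_C)_a AC - C(D_C)_a^\dagger A^\dagger - C(D_A)_{Ca}^\dagger$ and $Y(a)$ is the analogous expression with $A$ and $\omega_A$ replaced by $B$ and $\omega_B$. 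Matching $\lambda$-coefficients, the theorem is equivalent to the three identities $A\cdot X(a) = 0$ (Hamiltonicity of $AC$), $A\cdot Y(a) + B\cdot X(a) = 0$ (compatibility, i.e., the Hamiltonian pair), and $B\cdot Y(a) = 0$ (Hamiltonicity of $BC$, granted). Using the preHamiltonian pair identity (\ref{paireq}) evaluated at $c = Ca$ and composed with $C$ on the right, the cross term can be rewritten, and combined with $B\cdot Y(a) = 0$ (which under injectivity of $B$ forces $Y(a) = 0$ as a difference operator), the compatibility identity further reduces to $B\cdot X(a) = 0$. Hence the whole theorem reduces to $A\cdot X(a) = 0$ and $B\cdot X(a) = 0$.

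To deduce these two identities, I would use the skew-symmetry of $AC$. Via $(AC)^\dagger = -AC$, equivalently $C^\dagger A^\dagger + AC = 0$, the Hamiltonian defect $A\cdot X(a) = \mathcal{J}_{AC}(a)$ is automatically skew-symmetric as a difference operator, and the same holds for the cross term $A\cdot Y(a) + B\cdot X(a)$. Combining these adjoint constraints with Lemma \ref{CD} applied to suitable bidifference operators arising from the preHamiltonian pair identity, one should conclude that $X(a)$ itself vanishes as a difference operator, which immediately gives both $A\cdot X(a) = 0$ and $B\cdot X(a) = 0$. The main obstacle is exactly this last step: neither $A\cdot X(a) = 0$ nor $B\cdot X(a) = 0$ is manifest from the hypotheses alone, and carefully managing the interplay between the skew-symmetry of $AC$, the preHamiltonian pair identity, and the division Lemma \ref{CD} is the heart of the proof.
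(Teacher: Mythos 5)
Your reduction is correct as far as it goes, and it is genuinely leaner than the paper's packaging: the factorisation $\mathcal{J}_{(A+\lambda B)C}(a)=(A+\lambda B)\bigl(X(a)+\lambda Y(a)\bigr)$ does hold (apply the preHamiltonian identity for $A+\lambda B$ at the point $(A+\lambda B)Ca$, using $\omega_{A+\lambda B}=\omega_A+\lambda\omega_B$), and since $BC$ is Hamiltonian one gets $B\,Y(a)=\mathcal{J}_{BC}(a)=0$, hence $Y(a)=0$. One small correction: the justification is that $\cR$ is a domain, not \emph{injectivity of $B$} --- $B$ is generally not injective on $\cF$ (e.g. $\cS-1$ annihilates constants), but a vanishing composition of difference operators forces the right factor to vanish. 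With $Y=0$, the whole theorem indeed collapses to $X(a)=0$ for all $a\in\cF$, and your $X(a)$ is precisely the analogue of the defect $\Sigma$ (the failure of identity \eqref{eqnlochambis}) around which the paper's Proposition \ref{proham} is built; the paper itself proves Theorem \ref{repair} as the special case $K=BC$ of Theorem \ref{thham}, whose engine is that proposition. So your direct route, bypassing the Ore/rlcm manipulations that Theorem \ref{thham} needs for a genuinely rational $K$, rediscovers the right intermediate object.

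The gap is exactly the step you flag, and it is not closable with the tools you invoke. Since $AC$ and $BC$ are skew-symmetric, each $\lambda$-coefficient of the defect is automatically skew-symmetric, giving $A\,X_a=-X_a^\dagger A^\dagger$ and $B\,X_a=-X_a^\dagger B^\dagger$, hence $(A+\lambda B)X_a=-X_a^\dagger(A+\lambda B)^\dagger$ for all $\lambda\in\cK$. Lemma \ref{CD} then yields only that $(A+\lambda B)^\dagger$ divides $X_a$ on the right for every $\lambda$ --- divisibility, never vanishing. Skew-symmetry plus two-sided divisibility is in general insufficient: in the proof of Theorem \ref{charham} the analogous quotient $Q_a$ is killed only by a total-order bound supplied by the coefficient conditions $h^{(i)}=h^{(i)}(u,\dots,u_i)$, a resource absent here. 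What closes the hole in the paper is Lemma \ref{CDE}: from the one-parameter family of right divisibilities one extracts $W\in\cR$ and $b\in\cF$ with $WA^\dagger=b\,WB^\dagger$ and $X_a=WB^\dagger$; then the simultaneous skew-symmetry of the two operators forces a commutation relation of the form $b\tilde H=\tilde H b$ for a conjugated nonzero skew operator $\tilde H$, which is possible only for $b\in\cK$, and for constant $b$ the relation degenerates to $B$ proportional to $A$, where the claim is trivial --- whence $X_a=0$. Your proposal cites only Lemma \ref{CD}, says \emph{one should conclude}, and concedes the obstacle; without Lemma \ref{CDE} and the constancy argument the proof does not close, and this is precisely where the paper concentrates its hardest Section \ref{Sec2} machinery.
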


In the next section we shall give a more general result in Theorem \ref{thham} 
and the proof of the above theorem will be a simple Corollary.
A special case of Theorem \ref{repair} is when the operator $C=1$, which leads to the following result.
\begin{Cor}
Let $A$ and  $B$ be a preHamiltonian pair such that $A$ is skew-symmetric and $B$ is Hamiltonian. Then $A$ is also Hamiltonian and
forms a Hamiltonian pair with $B$.
\end{Cor}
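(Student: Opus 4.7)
My proof plan is to invoke Theorem \ref{repair} with $C$ equal to the identity operator. With $C = 1$, the hypotheses of Theorem \ref{repair} become: $A$ and $B$ form a preHamiltonian pair (given), $A \cdot 1 = A$ is skew-symmetric (given), and $B \cdot 1 = B$ is Hamiltonian (given). The conclusion of Theorem \ref{repair} then states that $A \cdot 1 = A$ is Hamiltonian and that $A$ forms a Hamiltonian pair with $B \cdot 1 = B$, which is exactly the Corollary. So no work is required beyond unpacking notation, once Theorem \ref{repair} is established.

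If I had to give a self-contained argument without Theorem \ref{repair}, the plan would be as follows. First, using Theorem \ref{algham} and the fact that $A$ is preHamiltonian with associated $2$-form $\omega_A$, the statement "$A$ is Hamiltonian" reduces to the pointwise identity $\omega_A(a, \bullet) = (D_A)_a^{\dagger}$ as difference operators (for every $a \in \cF$), using that $\cR$ is a domain and $A \neq 0$. The same identity, combined with the preHamiltonian pair relation and with $\omega_B(a, \bullet) = (D_B)_a^{\dagger}$ coming from the Hamiltonianness of $B$, is also what is required for $A + \lambda B$ to be Hamiltonian for all $\lambda \in \cK$. So both claims collapse to a single identity on $X := \omega_A(a, \bullet) - (D_A)_a^{\dagger}$.

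To derive this identity I would start from the preHamiltonian pair equation in the form
\begin{equation*}
A_*[Ba] + B_*[Aa] - (D_A)_a B - (D_B)_a A = A (D_B)_a^{\dagger} + B\, \omega_A(a, \bullet),
\end{equation*}
already using $\omega_B = (D_B)^{\dagger}$, and take adjoints, exploiting that skew-symmetry of $A$ and $B$ gives $(A_*[c])^{\dagger} = -A_*[c]$ and $(B_*[c])^{\dagger} = -B_*[c]$ via the identity $(A^{\dagger})_* = (A_*)^{\dagger}$. Comparing the two relations yields a twisted commutation of the form $BX = X^{\dagger} B$; the analogous manipulation starting from the preHamiltonian identity for $A$ alone produces $AX = X^{\dagger} A$.

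The main obstacle in this direct route is to deduce $X = 0$ from these two twisted commutation relations. Here the total-order bound on $X$ as a bidifference operator, together with the structural Lemmas \ref{CD} and \ref{CDE} from Section \ref{Sec2}, are precisely the tools one would use; they are exactly what powers the proof of Theorem \ref{repair} (and its more general form Theorem \ref{thham}). Since that work is already packaged there, the efficient proof is the one-line invocation described in the first paragraph.
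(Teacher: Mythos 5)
Your proof is correct and matches the paper exactly: the paper itself introduces this Corollary with the words ``A special case of Theorem \ref{repair} is when the operator $C=1$'', so the intended proof is precisely the one-line specialisation you give in your first paragraph. The additional sketch of a self-contained argument is a reasonable outline of what powers Theorem \ref{repair} (via Theorem \ref{thham}), but it is not needed here.
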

\begin{Ex} Consider the Volterra chain $u_t=u (u_1-u_{-1})$. It possesses a recursion operator 
$$
 R=A B^{-1}, \  \mbox{where $A=u(\cS +1)(u \cS-\cS^{-1} u)$, $B=u (\cS-1)$},
$$
and $A, B$ form a preHamiltonian pair.
Take $C=(1+\cS^{-1})u$. In Example \ref{ex1}, we verified that $BC$ is Hamiltonian. Notice that
$AC$ is skew-symmetric. Using the above theorem, we obtain that it is a 
Hamiltonian operator and  forms a Hamiltonian pair with $BC$.
\end{Ex}

\subsection{Generalisation to rational difference operators}\label{sec32}

In Examples 1-3 we illustrated our theory using the Hamiltonian 
structure of the Volterra hierarchy. Actually, the Voltera equation is the only 
example known to us of a scalar nonlinear difference equation possessing a 
compatible pair of difference  Hamiltonian operators\footnote{In the recent 
preprint \cite{DKVW1} the authors classified difference Hamiltonian operators of 
total order less or equal to $10$. However, it turned out 
that for the Hamiltonian pairs appeared in this classification, the 
hierarchies 
obtained through the Lenard scheme techniques were all equivalent to 
the Volterra chain.}. For all other integrable differential-difference equations known 
to us at least one Hamiltonian is a rational
(pseudo-difference) operator. In this Section we give all required definitions, develop 
the theory of rational Hamiltonian operators and study their relations with 
pairs of preHamiltonian difference operators.

Let $H$ be a skew-symmetric operator with decomposition $H=AB^{-1}$. It 
is defined on be the following subspace of $\cF'$ denoted by  $\cF'_B$, that is,
\begin{equation}
\cF'_B =\{  f \in \cF' | \delta_u f \in \im B\}.
\end{equation}
Note that if a difference operator $C$ divides $B$ on the left, then $\cF'_B \subseteq \cF'_C$ since $\im B \subseteq \im C$.
\begin{Ex} The domain of the rational operator $H_1$  with decomposition $u(\cS-1) (\frac{1}{u}(\cS+1))^{-1} $ introduced for the modified Volterra chain (\ref{mvalham}) is
$$\cF'_{H_1}=\{ f \in \cF' | \sum_n \frac{\partial f_n}{\partial u} \in \frac{1}{u}(S+1) \cF \}.$$
\end{Ex}
It follows from $H^\dagger=-H$ that
\begin{equation}
 \label{AB+}
 B^\dagger A=-A^\dagger B.
\end{equation}

The pair $A,B$ naturally defines an anti-symmetric bracket $\{ \bullet, 
\bullet \}_{A,B}: \cF'_B \times \cF'_B \mapsto \cF'_B$. For $f,g\in\cF'_B$ there 
exist $a,b\in\cF$ such that $Ba=\delta_u f$ and $Bb=\delta_u g$. Then the
bracket $\{ f, 
g \}_{A,B} $ can be defined as follows (c.f.(\ref{lieham}))
\begin{equation}
\{  f,  g\}_{A,B}=\smallint Ba \cdot Ab.
\end{equation}
It is independent on the choice of $a$ and $b$. Indeed, 
\begin{equation*}
\smallint \delta_u f \cdot Ab=\smallint Ba\cdot Ab=\smallint 
a\cdot B^{\dagger}Ab=-\smallint a\cdot A^{\dagger}Bb=-\smallint 
Aa\cdot \delta_u g,
\end{equation*}
 since  $A^{\dagger}B$ is skew-symmetric (\ref{AB+}). This also 
implies that the bracket $\{ \bullet, \bullet \}_{A,B}$ itself is anti-symmetric: 
 \begin{equation*}
 \{  f,  g\}_{A,B}=\smallint \delta_u f \cdot Ab=-\smallint 
Aa\cdot \delta_u g=-\{ 
g, f\}_{A,B}.
 \end{equation*}

\begin{Pro}\label{pronoha2}
Let $A$ and $B$ be two difference operators such that their ratio $AB^{-1}$ is skew-symmetric and such that the bracket $\{ \bullet, \bullet \}_{A,B}$ is a Lie bracket 
on 
$\cF'_B$. Assume that the form $\smallint(r\cdot\delta_u f)$, where 
$r~\in~\cF,\ f \in \cF'_B$ is non-degenerate. Then the operator $A$ is 
preHamiltonian 
satisfying 
\begin{equation}\label{apre1}
A(\omega_A(a,\bullet))=A_*[Aa]-(D_A)_aA,\qquad \forall a\in\cF,
\end{equation}
 and the operator $B$ satisfies  
\begin{equation} \label{eqnlochambis1}
 B_*[A a]-(D_B)_a A+(D_B)_a^{\dagger} A+(D_A)_a^{\dagger} B=B 
(\omega_A(a,\bullet)) ,\qquad \forall a\in\cF.
\end{equation}
\end{Pro}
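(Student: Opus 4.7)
The strategy is to generalise the Jacobi-to-operator-identity step in the proof of Theorem~\ref{algham} to the rational setting $H=AB^{-1}$. Pick $f,g,h\in\cF'_B$ and introduce $a,b,c\in\cF$ via $Ba=\delta_u f$, $Bb=\delta_u g$, $Bc=\delta_u h$. The essential technical ingredient throughout is that the Fréchet derivatives $(Ba)_*$, $(Bb)_*$, $(Bc)_*$ are self-adjoint, since $Ba,Bb,Bc$ are variational derivatives; written out, this gives the relation $(D_B)_a+Ba_*=(D_B)_a^\dagger+a_*^\dagger B^\dagger$ (and likewise for $b,c$), which will be used repeatedly to swap the asymmetric terms in the expansions.

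First I would use \eqref{AB+} to rewrite the outer bracket as
$$\{f,\{g,h\}_{A,B}\}_{A,B}\;=\;-\smallint Aa\cdot\delta_u(Bb\cdot Ac),$$
and expand $\delta_u(Bb\cdot Ac)=(Bb)_*^\dagger(Ac)+(Ac)_*^\dagger(Bb)$ via the Leibniz rule for variational derivatives. Substituting the decompositions $(Bb)_*=(D_B)_b+Bb_*$ and $(Ac)_*=(D_A)_c+Ac_*$, taking their adjoints, and moving operators across $\smallint$ via adjoints (with another use of $B^\dagger A=-A^\dagger B$), every term can be put in the form $\smallint Ba\cdot\Phi_1(b,c)$ for bidifference operators $\Phi_1$ built from $A_*,B_*,(D_A),(D_B)$ and their adjoints. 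The two cyclic permutations of $(f,g,h)$ give analogous expansions.

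Summing the three cyclic contributions and invoking the Jacobi identity yields $\smallint Ba\cdot\Phi_{\mathrm{tot}}(b,c)=0$ for all $f,g,h\in\cF'_B$. The non-degeneracy hypothesis on $\smallint(r\cdot\delta_u f)$ strips the outer integral and gives $\Phi_{\mathrm{tot}}(b,c)=0$ pointwise for all admissible $b,c$. A density argument modelled on the final paragraph of the proof of Theorem~\ref{algham}, applied to the infinite-dimensional $\cK$-subspace $\delta_u\cF'_B$ of $\im B$, extends the identity to all $b,c\in\cF$.

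The final structural step is to recognise the resulting bidifference-operator identity as the conjunction of \eqref{apre1} and \eqref{eqnlochambis1} with a single common $\omega_A$. I would rearrange $\Phi_{\mathrm{tot}}=0$ into the schematic form $A\cdot X(a,\bullet)=-B\cdot Y(a,\bullet)$, where the factor on the $A$-side encodes $A_*[Aa]-(D_A)_a A$ and the factor on the $B$-side encodes $B_*[Aa]-(D_B)_a A+(D_B)_a^\dagger A+(D_A)_a^\dagger B$. An application of Lemma~\ref{CD} (or a direct order comparison, using that operators divisible by $A$ on the left and by $B$ on the left interact only through a common bidifference factor) lets me extract a single bidifference operator $\omega_A(a,\bullet)$ such that both sides equal $AB\cdot\omega_A(a,\bullet)$ suitably split, giving \eqref{apre1} from the $A$-piece and \eqref{eqnlochambis1} from the $B$-piece with the same $\omega_A$. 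The main obstacle will be the bookkeeping of the many terms produced by the three cyclic expansions, together with the concluding separation step: one must verify that the splitting of $\Phi_{\mathrm{tot}}$ into an $A$-part and a $B$-part is sufficiently rigid to force the two extracted forms of $\omega_A$ to coincide, which is where the application of Lemma~\ref{CD} appears indispensable.
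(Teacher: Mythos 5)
There is a genuine gap, and it is structural: you try to extract \emph{both} \eqref{apre1} and \eqref{eqnlochambis1} from the Jacobi identity, whereas the paper obtains them from two \emph{different} axioms of the bracket. The hypothesis that $\{ \bullet, \bullet \}_{A,B}$ is a Lie bracket on $\cF'_B$ includes closure, $\{f,g\}_{A,B}\in\cF'_B$, i.e.\ $\delta_u(Ba\cdot Ab)\in\im B$. The paper first computes $\delta_u(Ba\cdot Ab)$ directly (using only skew-symmetry of $AB^{-1}$ and self-adjointness of $(Ba)_*,(Bb)_*$) and finds it equal to $\bigl(B_*[Ab]-(D_B)_b A+(D_B)_b^{\dagger}A+(D_A)_b^{\dagger}B\bigr)(a)+B\bigl(a_*[Ab]-b_*[Aa]\bigr)$; closure then forces the first summand into $\im B$ on an infinite-dimensional subspace $W$, which \emph{defines} the form $\omega$ and yields \eqref{eqnlochambis1}. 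Only afterwards is the Jacobi identity invoked, and — using the already-established identity \eqref{xyinW} to evaluate the nested brackets — it collapses to $\smallint Ba\cdot\bigl(A(\omega(b,c)+c_*[Ab]-b_*[Ac])+(Ab)_*[Ac]-(Ac)_*[Ab]\bigr)=0$, from which non-degeneracy gives \eqref{apre1} with the \emph{same} $\omega$. Your plan skips the closure step entirely, and this breaks your argument in two places.

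First, your reduction of every Jacobi term to the form $\smallint Ba\cdot\Phi(b,c)$ is not available without closure: moving $A$ across the integral in a term like $\smallint Aa\cdot\delta_u(Bb\cdot Ac)$ via $B^\dagger A=-A^\dagger B$ requires writing $\delta_u(Bb\cdot Ac)=B(e)$ for some $e\in\cF$, which is precisely the closure statement (essentially \eqref{eqnlochambis1}) that you are trying to prove; otherwise you only reach $\smallint a\cdot A^\dagger(\cdots)$, to which the stated non-degeneracy of $\smallint(r\cdot\delta_u f)$ does not apply. Second, even granting a single trilinear identity $\Phi_{\mathrm{tot}}=0$, your proposed separation into an ``$A$-part'' and a ``$B$-part'' sharing one common $\omega_A$ is not justified: a single relation of the schematic form $A\,X=-B\,Y$ does not determine $X$ and $Y$ individually, and Lemma \ref{CD} (which factors a bidifference operator through a right divisor given a divisibility relation) is not the right tool to perform such a split — indeed the paper's proof of this proposition never uses it. To repair the argument you should first exploit closure to produce $\omega$ and \eqref{eqnlochambis1}, and only then feed that information into the Jacobi identity to obtain \eqref{apre1}.
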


\begin{proof}
We know that $\{ \bullet, \bullet \}_{A,B}$ is  a Lie bracket on $\cF'_B$, which 
implies that $\{  f,  g\}_{A,B} \in \cF'_B$ for all $ f,  g \in \cF'_B$ and thus 
$\delta_u \{  f,  g\}_{A,B} \in B(\cF)$. 
Let $W$ be the $\cK$--linear space $W=\{a\in\cF\,|\, 
(Ba)_*=(Ba)_*^\dagger\}$ , or in other words for any element $a\in W$ there 
exists $f\in\cF'_B$ such that $Ba=\delta_u f$. The space $W$ is infinite 
dimensional over $\cK$ since the form $\smallint(r\cdot\delta_u f)$ is 
non-degenerate.
For all $a, b \in W$ we 
have
\begin{equation*}
\begin{split}
 &\quad \delta_u(Ba\cdot Ab)
  = (Ba)_*^{\dagger}(Ab)+(Ab)_*^{\dagger}(Ba)
    = (Ba)_*[Ab]+(Ab)_*^{\dagger}(Ba) \\
   &= B_*[Ab](a)+B a_*[A 
b]+b_*^{\dagger}A^{\dagger}B(a)+(D_A)_\fb^{\dagger}(Ba)\\ 
   &= B_*[Ab](a)+B a_*[A 
b]-b_*^{\dagger}B^{\dagger}A(a)+(D_A)_b^{\dagger}(Ba)\\
   &= B_*[Ab](a)+B a_*[A 
b]-(Bb)_*^{\dagger}(Aa)+(D_B)_b^{\dagger}(Aa)+(D_A)_b^{\dagger}
(Ba)\\ 
    &= B_*[Ab](a)+B a_*[A 
b]-(Bb)_*(Aa)+(D_B)_b^{\dagger}(Aa)+(D_A)_b^{\dagger}(Ba)\\ 
    &= \left(B_*[Ab]-(D_B)_b A+(D_B)_b^{\dagger} 
A+(D_A)_b^{\dagger}B\right)(a) +B \left(a_*[A b]-b_*[A a] \right).
    \end{split}
    \end{equation*}
This implies the existence of a form $\omega$ such that for all $a \in \cF$,
 \begin{equation*} 
 B_*[A a]-(D_B)_a A+(D_B)_a^{\dagger} A+(D_A)_a^{\dagger} B=B 
(\omega(a,\bullet)).
\end{equation*}
Indeed, if $M$ is a bidifference operator such that $M(a,b) \in \im B$ for all 
$a, b \in V$, where $V$ is a subspace of $\cF$ infinite-dimensional over $\cK$, 
then there exists a bidifference operator 
$N$ such that $M(a,b)=B(N(a,b))$ for all $a, b \in \cF$. In terms of $\omega$ we 
have for all $a,b \in W$
\begin{equation}\label{xyinW}
B(\omega(a,b)+b_*[Aa]-a_*[Ab])=\delta_u(Bb\cdot Aa).
\end{equation}
 Let $f, g , h \in \cF'_B$ be such that $\delta_u f=Ba$, $\delta_u 
g=Bb$, and $\delta_u h=Bc$ for some $a, b, c \in W$. The first 
term in the Jacobi identity is
\begin{equation*}
\{f,\{g,h\}_{A,B}\}_{A,B} =-\smallint B(a)\cdot A(\omega(b, 
c)+c_*[Ab]-b_*[A\hb])
\end{equation*}
The second term is:
\begin{equation*}
\begin{split}
\{g,\{h, f\}_{A,B}\}_{A,B}& =\smallint Ab\cdot \delta_u(Bc\cdot Aa) =-\smallint 
Ab\cdot \delta_u(Ba\cdot Ac)\\
                  &=-\smallint Ab\cdot (Ba)_*^{\dagger}(Ac)-\smallint 
Ab\cdot (Ac)_*^{\dagger}(Ba) 
                  =-\smallint Ab\cdot (Ba)_*[Ac]-\smallint Ba\cdot 
(Ac)_*[Ab].
                  \end{split}
\end{equation*}
and similarly, the third term is
\begin{equation*}
\begin{split}
\{h,\{f, g\}_{A,B}\}_{A,B} &=\smallint Ac\cdot \delta_u(Ba\cdot Ab) 
                  =\smallint Ac\cdot (Ba)_*[Ab]+\smallint Ba\cdot 
(Ab)_*[Ac].
\end{split}
\end{equation*}
Hence we get 
\begin{equation}\label{jach}
\smallint Ba\cdot \left( A(\omega(b, 
c)+c_*[Ab]-b_*[Ac])+(Ab)_*[Ac]-(Ac)_*[Ab]\right) =0.
\end{equation}
Therefore 
\begin{equation}\label{Apreham}
A\left(\omega(a, b)\right)=A_*[Aa](b)-A_*[Ab](a)
\end{equation}
for all $a, b \in W$. Since $W$ is infinite-dimensional over $\cK$, 
\eqref{Apreham} holds for all $a, b \in \cF$, which is to say that $A$ is 
preHamiltonian. 
\end{proof}

The converse statement is also true and it does not require the minimality of 
the decomposition. 
\begin{Pro}\label{pronoha1}
Let $A$ and $B$ be two difference operators such that their ratio $H=AB^{-1}$ is skew-symmetric. Assume that 
the operator $A$ is preHamiltonian , i.e. there exist a $2$-form $\omega_A$ such that
(\ref{apre1}) holds
and that the operator $B$ satisfy the equation
 (\ref{eqnlochambis1}).
 Then the bracket $\{ \bullet, \bullet \}_{A,B}$ is a Lie bracket on 
$\cF'_B$. 
\end{Pro}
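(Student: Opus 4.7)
The plan is to invert the reasoning from Proposition \ref{pronoha2}. Antisymmetry of $\{\cdot,\cdot\}_{A,B}$ has already been established using only $B^\dagger A=-A^\dagger B$, which is equivalent to the skew-symmetry of $H$. It remains to verify (i) that the bracket takes values in $\cF'_B$, so that it is well-defined as a binary operation on $\cF'_B$, and (ii) that it satisfies the Jacobi identity.

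For (i), given $f,g\in\cF'_B$ with $\delta_u f=Ba$ and $\delta_u g=Bb$, I would replay the chain of identities for $\delta_u(Ba\cdot Ab)=(Ba)_*^\dagger(Ab)+(Ab)_*^\dagger(Ba)$ carried out in Proposition \ref{pronoha2}, arriving at
\begin{equation*}
\delta_u(Ba\cdot Ab)=\bigl(B_*[Ab]-(D_B)_b A+(D_B)_b^\dagger A+(D_A)_b^\dagger B\bigr)(a)+B\bigl(a_*[Ab]-b_*[Aa]\bigr).
\end{equation*}
The first term on the right coincides, by hypothesis \eqref{eqnlochambis1}, with $B(\omega_A(b,a))$, so
\begin{equation*}
\delta_u \{f,g\}_{A,B}=B\bigl(\omega_A(b,a)+a_*[Ab]-b_*[Aa]\bigr)\in\im B,
\end{equation*}
which confirms that $\{f,g\}_{A,B}\in\cF'_B$.

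For (ii), take $f,g,h\in\cF'_B$ with $\delta_u f=Ba$, $\delta_u g=Bb$, $\delta_u h=Bc$. The representative for the preimage of $\delta_u\{g,h\}_{A,B}$ under $B$ obtained in (i) lets me rewrite the first double-bracket as $\{f,\{g,h\}_{A,B}\}_{A,B}=-\int Ba\cdot A(\omega_A(b,c)+c_*[Ab]-b_*[Ac])$, and the other two terms expand by integration by parts exactly as in the forward direction. Summing them reduces the Jacobi expression to
\begin{equation*}
\smallint Ba\cdot\Bigl(A\bigl(\omega_A(b,c)+c_*[Ab]-b_*[Ac]\bigr)+(Ab)_*[Ac]-(Ac)_*[Ab]\Bigr).
\end{equation*}
The preHamiltonian hypothesis \eqref{apre1} applied to $A$ gives $A(\omega_A(b,c))=A_*[Ab](c)-A_*[Ac](b)$, and combined with the Leibniz rule $(Ax)_*[y]=A_*[y](x)+A(x_*[y])$ this forces the integrand to vanish identically, finishing the verification.

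The main technical point is careful bookkeeping of the adjoint operators $(D_A)^\dagger,(D_B)^\dagger$ and of signs, since the argument is a faithful rerun of the computation in Proposition \ref{pronoha2}. Notably, the non-degeneracy hypothesis of the forward direction is not needed here: there it was used to promote an identity valid on a rich subspace to one on all of $\cF$, whereas in the converse direction we start from operator identities that already hold on all of $\cF$ and simply evaluate them on variational derivatives.
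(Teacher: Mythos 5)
Your proposal is correct and follows essentially the same route as the paper: the authors' proof of Proposition \ref{pronoha1} consists precisely of observing that the identity for $\delta_u(Bb\cdot Aa)$ derived in Proposition \ref{pronoha2} now follows from hypothesis \eqref{eqnlochambis1} (giving well-definedness of the bracket on $\cF'_B$), and that the Jacobi expression \eqref{jach} then vanishes because $A$ is preHamiltonian. Your closing remark that non-degeneracy is not needed in this direction also matches the paper's own comment following the proposition.
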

\begin{proof}
The bracket $\{ \bullet, \bullet \}_{A,B}$ is well-defined on $\cF'_B$. Indeed, for 
all $a, b \in W$ equation \eqref{xyinW} holds. Moreover, since $A$ is 
preHamiltonian, equation \eqref{jach} is 
satisfied for all $a, b, c \in W$. Therefore the bracket $\{ \bullet, 
\bullet \}_{A,B}$ satisfies the Jacobi identity.
\end{proof}

Proposition \ref{pronoha1} can be seen as an analogue of Proposition $7.8$ 
in \cite{DSK13} in the case of rational differential Hamiltonian operators, 
which has been proven by methods of Poisson Vertex Algebras. Note that in the 
proof of Proposition \ref{pronoha1} we do not make any 
assumptions on the dimension of the space $\cF'_B$. In particular we do not require the form 
$\smallint(r\cdot\delta_u f)$ to be non-degenerate. Although the properties 
of the Poisson bracket, such as anti-symmetry and Jacobi identity have to be 
verified only on the elements of $\cF'_B$, the operator identities obtained are 
satisfied on all elements of $\cF$. This reflects the {\sl Substitution 
Principle} (see \cite{mr94g:58260}, Exercise 5.42).  In general it is very difficult to characterise the space 
$\cF'_H$, the Substitution Principle enables us to check the identities over 
the difference field $\cF$. Having it in mind and as well the Propositions 
\ref{pronoha2}, \ref{pronoha1} we can give a new and easily verifiable 
definition of a rational Hamiltonian operator.

\begin{Def}\label{nonlocham}
Let $H$ be a skew-symmetric rational operator. We say that $H$ is Hamiltonian if there exists a decomposition $H=AB^{-1}$ such that the operator $A$ is 
preHamiltonian, i.e. if there is a $2$-form $\omega_A$ such that for all $a \in 
\cF$
\begin{equation}\label{apre}
A(\omega_A(a,\bullet))=A_*[Aa]-(D_A)_aA
\end{equation}
 and if the operators $A$ and $B$ satisfy
 \begin{equation} \label{eqnlochambis}
 B_*[A a]-(D_B)_a A+(D_B)_a^{\dagger} A+(D_A)_a^{\dagger} B=B 
(\omega_A(a,\bullet)) \text{  for all } a \in \cF.
\end{equation}
\end{Def}
\begin{Rem}
Note that if a decomposition $H=AB^{-1}$ satisfies equations \eqref{apre} and \eqref{eqnlochambis}, then so does a minimal decomposition of $H=A_0{B_0}^{-1}$. 
Indeed if a pair of difference operator $A$, $B$ such that $A$ is prehamiltonian and equation \eqref{eqnlochambis} is satisfied has a common right factor, i.e. $A=A_0C$ and $B=B_0C$,  
then $A_0$ is preHamiltonian and the pair $A_0, B_0$ satisfies \eqref{eqnlochambis} as well. 
\end{Rem}
\begin{Rem}
Taking $B=1$ in Definition \ref{nonlocham} of rational Hamiltonian operators, one recovers the Definition \ref{locham} of Hamiltonian difference operator. In the sequel, we will say Hamiltonian operator to refer to a (a priori rational) operator in $\cQ$ satisfying Definition \ref{nonlocham}.
\end{Rem}
Definition \ref{nonlocham} can also be viewed as direct generalisation of Theorem \ref{algham} as explained in the following statement.
\begin{Pro}\label{pnlham}
Let $H$ be a skew-symmetric rational operator with minimal decomposition $H=AB^{-1}$. If $H$ satisfies (\ref{lochbis}) for all $a$ in the images of operator $B$,
then there is a $2$-form $\omega_A$ satisfying (\ref{apre}) and (\ref{eqnlochambis}) for all $a\in \cF$.
\end{Pro}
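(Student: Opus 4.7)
The plan is to feed the hypothesis (\ref{lochbis}) with $a\in\im B$ through the chain rule for rational Fr\'echet derivatives, then invoke the minimality of the decomposition $H=AB^{-1}$ to split the resulting rational identity into the two separate identities (\ref{apre}) and (\ref{eqnlochambis}).

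First I would write $a=Bc$, so that $Ha=Ac$ by the correspondance. Using $H_*[b]=A_*[b]B^{-1}-H\,B_*[b]B^{-1}$ one computes $(D_H)_{Bc}=(D_A)_c-H\,(D_B)_c$ and, combined with $H^\dagger=-H$, also $(D_H)_{Bc}^\dagger=(D_A)_c^\dagger+(D_B)_c^\dagger H$. Substituting these into (\ref{lochbis}), right-multiplying by $B$ to clear the $B^{-1}$ (using $HB=A$), and rearranging gives
\begin{equation*}
A_*[Ac]-(D_A)_c A=AB^{-1}\Psi_c,\qquad \Psi_c:=B_*[Ac]-(D_B)_c A+(D_A)_c^\dagger B+(D_B)_c^\dagger A.
\end{equation*}
For each fixed $c$ the LHS is a difference operator in $\bullet$; call it $L_c\in\cR$. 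Consequently $H=L_c\Psi_c^{-1}$ is another right-fractional presentation of $H$.

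Since $AB^{-1}$ is minimal, the uniqueness of minimal decompositions recalled in Section \ref{Sec2} supplies $E_c\in\cR$ with $L_c=AE_c$ and $\Psi_c=BE_c$. Setting $\omega_A(c,\bullet):=E_c$, the identity $\Psi_c=B\omega_A(c,\bullet)$ is exactly (\ref{eqnlochambis}), and plugging $E_c=\omega_A(c,\bullet)$ back into the display yields $A_*[Ac]-(D_A)_c A=A\omega_A(c,\bullet)$, which is (\ref{apre}). That $\omega_A$ is a genuine bidifference operator in $(c,\bullet)$, and not merely a $c$-parameterised element of $\cR$, follows because $c\mapsto\Psi_c$ is $\cK$-linear with coefficients supported on a fixed finite range of shifts of $c$, and left-division by the operator $B$ (which is independent of $c$) is $\cK$-linear and preserves the finite shift-support in $c$.

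The only non-routine step is the use of minimality in the previous paragraph: it is this that converts the a-priori rational identity $AB^{-1}\Psi_c=L_c\in\cR$ into actual left-divisibility of $\Psi_c$ by $B$ inside $\cR$. Everything else is careful bookkeeping with the chain rule, the adjoint, and the skew-symmetry $H^\dagger=-H$, together with the verification that the finite shift-support and $\cK$-linearity of $c\mapsto\Psi_c$ descend to the quotient $c\mapsto E_c$.
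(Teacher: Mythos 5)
Your proposal is correct and follows essentially the same route as the paper: substitute $a=Bc$ into (\ref{lochbis}) via the chain rule for $H_*$ and the skew-symmetry $H^\dagger=-H$, clear the $B^{-1}$ to obtain the mixed identity $A_*[Ac]-(D_A)_cA=AB^{-1}\Psi_c$, and then use minimality of $AB^{-1}$ to split it into (\ref{apre}) and (\ref{eqnlochambis}). The only cosmetic difference is that the paper implements the minimality step explicitly through the left least common multiple $CA=DB$ (so that $CL_c=D\Psi_c$ forces $L_c=AE_c$, $\Psi_c=BE_c$), whereas you invoke the equivalent fact that every presentation of $H$ is a right multiple of the minimal one; both arguments rest on the same Ore-domain fact.
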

\begin{proof}
For $H=AB^{-1}$, we have $H_*=A_*B^{-1}-AB^{-1}B_* B^{-1}$. Taking $a=B  b ,  b  
\in \cF$, we get
$$
(D_H)_a=(D_A)_ b -AB^{-1}(D_B)_ b .
$$
Thus identity (\ref{lochbis}) leads to
\begin{eqnarray*}
 A_*[A b ]-A B^{-1} B_*[A b ]-(D_A)_ b  A+AB^{-1}(D_B)_ b  A
 =AB^{-1}\left({(D_A)_ b }^{\dagger} B +{(D_B)_ b }^{\dagger} A  \right),
\end{eqnarray*}
where we used $H$ being anti-symmetric, that is,
\begin{eqnarray}\label{mix}
 A_*[A b ] -(D_A)_ b  A
 =AB^{-1}\left(B_*[A b ]-(D_B)_ b  A+{(D_A)_ b }^{\dagger} B +{(D_B)_ b 
}^{\dagger} A  \right) .
\end{eqnarray}
Let $CA=DB$ be the left least common multiple of the pair $A$ and $B$. It is 
also the right least common multiple of 
the pair $C$ and $D$ since $AB^{-1}$ is minimal. 
It follows from (\ref{mix}) that 
$$
C A_*[A b ] -(D_A)_ b  A
 =D\left(B_*[A b ]-(D_B)_ b  A+{(D_A)_ b }^{\dagger} B +{(D_B)_ b }^{\dagger} A  
\right) 
$$
Therefore there exists a $2$-form denoted by $\omega_A$ satisfying (\ref{apre}) and (\ref{eqnlochambis}).
\end{proof}

\begin{Ex}\label{exam4}
We check that the operator $H_1$ defined by (\ref{mvalham}) is indeed Hamiltonian.
Note that 
$$H_1=A B^{-1},\ A=u(\cS-1), \ B=\frac{1}{u}(\cS+1).$$
It is obviously skew-symmetric.
 For any $a,b \in \cF$ we have $A_*[Aa](b)=u(a_1-a)(b_1-b)$. Hence $A$ is preHamiltonian with $\omega_A=0$. 
 We have $(D_A)_a=a_1-a$ and $(D_B)_a=-\frac{1}{u^2}(a_1+a)$.  Thus we have
\begin{eqnarray*} \label{exmodvol}
B_*[Aa]=-\frac{1}{u}(a_1-a)(S+1), && (D_B)_a A=\frac{1}{u}(a_1+a)(S-1), \\
(D_B)_a^{\dagger} A=\frac{1}{u}(a_1+a)(S-1), &&
(D_A)_a^{\dagger} B=\frac{1}{u}(a_1-a)(S+1).
\end{eqnarray*}
Therefore, (\ref{eqnlochambis}) is satisfied and $H_1$ is a Hamiltonian operator. 
\end{Ex}

We now investigate how preHamiltonian pairs relate to 
Hamiltonian pairs. 

\begin{Pro}\label{proham}
Let $A$ and $B$ be compatible preHamiltonian operators. Assume that there exists a difference 
operator $C$ such that $BC^{-1}$ is skew-symmetric, $B$ and $C$ satisfy \eqref{eqnlochambis} and
$AC^{-1}$ is skew-symmetric. Then the operators $A$ and  $C$ satisfy \eqref{eqnlochambis}.
In particular, the rational operator $AC^{-1}$ is Hamiltonian.
\end{Pro}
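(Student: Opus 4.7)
The plan is to verify \eqref{eqnlochambis} for the pair $(A,C)$ as an operator identity in $a$. Let $\omega_A$ and $\omega_B$ denote the $2$-forms associated to the preHamiltonian operators $A$ and $B$, so that by the preHamiltonian pair hypothesis the mixed identity \eqref{paireq} holds with $\omega_A$ and $\omega_B$, and $\omega_{A+\lambda B}=\omega_A+\lambda\omega_B$. Define
\[
E(a)\;:=\;C_*[Aa]-(D_C)_aA+(D_C)_a^{\dagger}A+(D_A)_a^{\dagger}C-C\,\omega_A(a,\bullet);
\]
the goal is to show $E(a)=0$ for all $a\in\cF$. Writing $E_B(a)$ for the analogous expression with $A$ replaced by $B$, the hypothesis that $B,C$ satisfy \eqref{eqnlochambis} gives $E_B(a)=0$, and by linearity the expression associated to the pair $(A+\lambda B,C)$ equals $E(a)+\lambda E_B(a)=E(a)$; this freedom in $\lambda$ will be used implicitly through the pair identity.

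I would combine the two given identities to eliminate $\omega_B$ and isolate $C\,\omega_A$. From \eqref{paireq},
\[
A\,\omega_B(a,\bullet)+B\,\omega_A(a,\bullet)=A_*[Ba]+B_*[Aa]-(D_A)_aB-(D_B)_aA,
\]
while the assumption $E_B(a)=0$ supplies $C\,\omega_B(a,\bullet)=C_*[Ba]-(D_C)_aB+(D_C)_a^{\dagger}B+(D_B)_a^{\dagger}C$. Multiplying and combining these using the skew-symmetries $A^{\dagger}C=-C^{\dagger}A$ and $B^{\dagger}C=-C^{\dagger}B$, together with standard adjoint identities such as $(P_*[x])^{\dagger}=(P^{\dagger})_*[x]$ and the Leibniz rule for the Fr\'echet derivative of a product, the aim is to reach an identity of the form $E(a)=C\cdot N_a$ for some bidifference operator $N$. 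In other words, $E(a)$ lies in the image of $C$ on the left.

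Once $E(a)=C\cdot N_a$ is established, I would use the skew-symmetry of $AC^{-1}$ (and of $BC^{-1}$) to show that $E(a)$ is also divisible by $C$ on the right; Lemma \ref{CD} then produces a bidifference operator $Q$ with $E(a)=CQ_aC$. The resulting $Q_a$ must itself be skew-symmetric, so a total-order argument along the lines of the proof of Theorem \ref{charham} forces $Q=0$ and hence $E(a)=0$. The hard part will be the middle step: the intricate adjoint bookkeeping needed to combine \eqref{paireq} with $E_B(a)=0$ into a clean statement that $E(a)\in\im C$, keeping track of which terms arise as $(D_\bullet)_a$ versus $(D_\bullet)_a^{\dagger}$ and correctly exploiting both skew-symmetry conditions simultaneously. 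As an alternative route, one could instead verify the Jacobi identity for $\{\cdot,\cdot\}_{A,C}$ on $\cF'_C$ and invoke Proposition \ref{pronoha2}, but this essentially reshuffles the same computation.
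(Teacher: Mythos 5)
There is a genuine gap, and it sits exactly where you flag the ``hard part'': the middle step is not merely intricate bookkeeping, it is the wrong target. There is no reason for the defect $E(a)=\Sigma$ to lie in $\im C$ on the left, and even if it did, that would only show that $(A,C)$ satisfies \eqref{eqnlochambis} with $\omega_A$ replaced by $\omega_A+N$, not with the same form that makes $A$ preHamiltonian in \eqref{apre} --- so you would still have to prove $N=0$. Your subsequent steps also do not go through: right divisibility of $E(a)$ by $C$ in Theorem \ref{charham} came from $P_a$ being skew-symmetric, which $E(a)$ is not known to be here; and the final total-order argument in Theorem \ref{charham} relied on the bound $\Ord P_a\le 4k$ coming from the coefficient conditions $h^{(i)}=h^{(i)}(u,\dots,u_i)$, a bound with no analogue in the present generality. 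The alternative route via Proposition \ref{pronoha2} also fails as stated, since that proposition needs the non-degeneracy of $\smallint(r\cdot\delta_u f)$ on $\cF'_C$, which is not among the hypotheses.

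The paper's mechanism is different. One does not try to divide $\Sigma$ by $C$ at all: instead one computes $A^{\dagger}\Sigma$ and $B^{\dagger}\Sigma$ \emph{modulo skew-symmetric operators}, using \eqref{eqvA}--\eqref{eqvC} together with the skew-symmetry of $A^{\dagger}C$ and $B^{\dagger}C$ (terms such as $A^{\dagger}(D_C)_a^{\dagger}A-A^{\dagger}(D_C)_aA$ drop out because they are manifestly skew-symmetric), concluding that $A^{\dagger}\Sigma=-\Sigma^{\dagger}A$ and $B^{\dagger}\Sigma=-\Sigma^{\dagger}B$. Hence $(A+\lambda B)^{\dagger}\Sigma=-\Sigma^{\dagger}(A+\lambda B)$ for all $\lambda$, and Lemma \ref{CD} gives that $A+\lambda B$ divides $\Sigma$ on the \emph{right} for every $\lambda\in\cK$. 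The closing step, which your proposal is missing entirely, is Lemma \ref{CDE}: if $\Sigma\neq0$ one obtains $X\in\cR$ and $b\in\cF$ with $XB=bXA$, and the simultaneous skew-symmetry of $AC^{-1}$ and $BC^{-1}$ forces $b$ to be a constant, in which case $A$ and $B$ are proportional and there is nothing to prove; thus $\Sigma=0$. I would redirect your effort toward reproducing the ``modulo skew-symmetric'' computation and the Lemma \ref{CDE} endgame rather than the $C$-divisibility route.
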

\begin{proof} 
Since the difference operators $A$ and $B$ form a preHamiltonian pair, for all $a \in 
\cF$ we have
\begin{eqnarray}
&&A_*[Aa]-(D_A)_a A=AM_a;\label{eqvA} \\
&&B_*[Ba]-(D_B)_a B=BN_a; \label{eqvB}\\
&&A_*[Ba]+B_*[Aa]-(D_B)_a A-(D_A)_aB=AN_a+BM_a, \label{eqvAB}
\end{eqnarray}
where $M_a=\omega_A(a, \bullet)$ and $N_a=\omega_B(a, \bullet)$.
From the assumption, we know that
\begin{eqnarray} 
&& C_*[Ba]+(D_B)_a^\dagger C+(D_C)_a^\dagger B-(D_C)_a B=C N_a. \label{eqvC}
\end{eqnarray}
We need to prove that the operators $A$ and  $C$ satisfy \eqref{eqnlochambis}, that is, 
for all $a \in \cF$,
\begin{equation} \label{toprove}
 C_*[Aa]+(D_A)_a^\dagger C+(D_C)_a^\dagger A-(D_C)_a A= CM_a .
\end{equation}
Let $\Sigma$ be the difference of the LHS with the RHS of
\eqref{toprove}. We are going to show that both $A^\dagger \Sigma$ and
$B^\dagger \Sigma$ are skew-symmetric. We know that the rational operators $AC^{-1}$ and $BC^{-1}$ are skew-symmetric,
that is, $A^\dagger C$ and $B^\dagger C$ are skew-symmetric.
We first prove that $A^\dagger\Sigma$ is skew-symmetric. In the following we 
use the notation $ \equiv$ to say that two operators are equal up to adding an 
skew-symmetric operator. We have
\begin{equation*} 
\begin{split}
A^\dagger\Sigma &=A^\dagger C_*[Aa]+A^\dagger(D_{A})_a^\dagger 
C+A^\dagger(D_{C})_a^\dagger A-A^\dagger(D_{C})_aA -A^\dagger C M_a \\
                       & \equiv A^\dagger C_*[Aa]+A^\dagger(D_{A})_a^\dagger C 
-A^\dagger CM_a  \\
                       & \equiv -A_*^\dagger[Aa] C+A^\dagger (D_{A})_a^\dagger 
C+C^\dagger AM_a \\
                       & \equiv -M_a^\dagger A^\dagger C+C^\dagger AM_a 
                        \equiv 0 .
                       \end{split}
\end{equation*}
since $A^\dagger C$ is a skew-symmetric operator and $A$ is a preHamiltonian operator. We now check that 
$B^\dagger \Sigma$ is also skew-symmetric :
\begin{equation*}
\begin{split}
B^\dagger \Sigma &=B^\dagger C_*[Aa]+B^\dagger (D_{A})_a^\dagger 
C+B^\dagger(D_{C})_a^\dagger A-B^\dagger(D_{C})_aA -B^\dagger C M_a\\
& \equiv -B_*^\dagger[Aa] C+B^\dagger (D_{A})_a^\dagger 
C+B^\dagger(D_{C})_a^\dagger A-B^\dagger(D_{C})_aA -B^\dagger CM_a \\
& \equiv A_*[Ba]^\dagger C-A^\dagger(D_{B})_a^\dagger C-N_a^\dagger A^\dagger 
C-M_a^\dagger B^\dagger C \\
&\quad +B^\dagger(D_{C})_a^\dagger A-B^\dagger(D_{C})_aA +C^\dagger BM_a\\
& \equiv -A^\dagger C_*[Ba]-A^\dagger (D_{B})_a^\dagger C+N_a^\dagger C^\dagger 
A+B^\dagger(D_{C})_a^\dagger A-B^\dagger(D_{C})_aA  \\
& \equiv -A^\dagger C_*[Ba]-A^\dagger(D_{B})_a^\dagger C+C_*[Ba]^\dagger 
A+C^\dagger(D_{B})_aA \\
& \equiv 0.
\end{split}
\end{equation*}
We used relations (\ref{eqvB})--(\ref{eqvC}) and the fact that $A^\dagger C$ and $B^\dagger C$ are skew-symmetric 
operators. By now, we have proved that
\begin{equation}\label{sigma}
A^\dagger \Sigma =-\Sigma^\dagger A \quad \mbox{and} \quad
B^\dagger \Sigma =-\Sigma^ \dagger B.
\end{equation}
This leads to that for all 
$\lambda \in \cK$ we have $(A+\lambda B)^\dagger \Sigma=-(\Sigma)^\dagger(A+\lambda B)$. 
By Lemma \ref{CD} it implies that $(A+\lambda B)$ 
divides $\Sigma$ on the right for all $\lambda \in \cK$. If $\Sigma \neq 0$, it follows from Lemma \ref{CDE} that there exists $b \in \cF$ and $X \in \cR$ such that $XB=bXA$. 
Hence we have $H=AC^{-1}$ and $X^{-1}bX H=BC^{-1}$ are both skew-symmetric, that is $b \tilde{H}=\tilde{H} b$ where $\tilde{H}=XHX^{\dagger}$. 
This can only be the case if $b\in\cK$ is a constant. 
But in this case we have nothing to prove.  Thus $\Sigma=0$ implying that $AC^{-1}$ is 
a Hamiltonian operator by Definition \ref{nonlocham}.
\end{proof}

The above proposition shows that for a preHamiltonian pair $A$ and $B$, if there is a difference operator $C$ such that
the ratio with one of them is a Hamiltonian operator, so is the ratio with another one if it is skew-symmetric. 
We will give much stronger result in the following theorem:

\begin{The}\label{thham}
Let $A$ and $B$ be compatible preHamiltonian operators and $K$ be a 
rational Hamiltonian operator. Then, provided that
$H=AB^{-1}K$ is skew-symmetric, it is Hamiltonian and compatible with $K$.
\end{The}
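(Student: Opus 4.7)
My plan is to first reduce the compatibility statement to a single claim about $H$ alone, and then to establish that claim by an Ore-theoretic reduction to Proposition~\ref{proham}.

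For the reduction, I observe that
\[
H + \lambda K = (A + \lambda B) B^{-1} K \qquad \text{for every } \lambda \in \cK,
\]
that $(A + \lambda B, B)$ is again a preHamiltonian pair because $(A + \lambda B) + \mu B = A + (\lambda + \mu) B$ is preHamiltonian for all $\mu \in \cK$, and that $H + \lambda K$ is skew-symmetric since both $H$ and $K$ are. Consequently the theorem follows at once from the implication ``\textit{preHamiltonian pair $A, B$ and rational Hamiltonian $K$, $A B^{-1} K$ skew-symmetric $\Rightarrow$ $A B^{-1} K$ Hamiltonian}'' applied in turn with the pair $(A, B)$ and with each pair $(A + \lambda B, B)$. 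From this point I only work towards that single implication.

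To prove it, I would fix a minimal Hamiltonian decomposition $K = A_0 B_0^{-1}$ provided by Definition~\ref{nonlocham}, so that $A_0$ is preHamiltonian and the pair $(A_0, B_0)$ satisfies \eqref{eqnlochambis}. Using the right Ore property of $\cR$ applied to $B$ and $A_0$, I choose difference operators $N, M$ with $B N = A_0 M$ and take this to be the right least common multiple of $B$ and $A_0$. By Lemma~\ref{rlcmpreh}, the operator $B N = A_0 M$ is then preHamiltonian, and one obtains the common-denominator presentation
\[
H = AN\,(B_0 M)^{-1}, \qquad K = A_0 M\,(B_0 M)^{-1}.
\]
The aim is now to apply Proposition~\ref{proham} to the triple $(AN, BN, B_0 M)$. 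This requires three verifications: (a) $AN$ is preHamiltonian; (b) $(AN, BN)$ is a preHamiltonian pair; and (c) the non-minimal decomposition $K = A_0 M (B_0 M)^{-1}$ still satisfies \eqref{eqnlochambis}. Given (a), item (b) follows from Proposition~\ref{cpair} applied to the preHamiltonian pair $(A, B)$ and the operator $N$; item (c) is a lifting statement from the minimal decomposition $(A_0, B_0)$ to $(A_0 M, B_0 M)$, obtained by a direct computation exploiting that $A_0 M$ is preHamiltonian, or equivalently by restricting the bracket $\{\cdot,\cdot\}_{A_0, B_0}$ from $\cF'_{B_0}$ to the subspace $\cF'_{B_0 M}$ and invoking Proposition~\ref{pronoha1}.

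I expect item (a) to be the main obstacle: preHamiltonianity of $BN$ is built in through the right least common multiple, but the analogue for $AN$ is not delivered directly by Lemmas~\ref{lbn}--\ref{rlcmpreh} or by Proposition~\ref{cpair}. If the direct check via Lemma~\ref{lbn} resists cleanup, the fallback is to mimic the argument of Proposition~\ref{proham} in the present setting: introduce an error operator
\[
\Sigma_a = (B_0 M)_*[AN a] - (D_{B_0 M})_a (AN) + (D_{B_0 M})_a^{\dagger}(AN) + (D_{AN})_a^{\dagger}(B_0 M) - (B_0 M)\,\omega(a, \bullet),
\]
show that both $(AN)^{\dagger}\Sigma$ and $(BN)^{\dagger}\Sigma$ are skew-symmetric by combining the preHamiltonian-pair identity \eqref{paireq} for $(A, B)$ with \eqref{eqnlochambis} for $(A_0, B_0)$ and with the relation $BN = A_0 M$, and then conclude $\Sigma = 0$ by noting that $(A + \lambda B)N$ divides $\Sigma$ on the right for every $\lambda \in \cK$ via Lemma~\ref{CD}, after which Lemma~\ref{CDE} applied to $A$ and $B$ — exactly as at the end of the proof of Proposition~\ref{proham} — forces $\Sigma$ to vanish, modulo the same exceptional constant-coefficient case that was already treated there.
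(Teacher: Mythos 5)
Your overall architecture is the same as the paper's: reduce compatibility to the single claim via $H+\lambda K=(A+\lambda B)B^{-1}K$, pass to a common right denominator $BN=A_0M$ by the Ore property so that $H=AN(B_0M)^{-1}$ and $K=A_0M(B_0M)^{-1}$, and then feed the triple into Proposition~\ref{proham}. Items (b) and (c) are handled essentially as in the paper ((b) via Proposition~\ref{cpair} is in fact a slightly cleaner shortcut than the paper's repetition of its argument for $A+\lambda B$; for (c) the ``direct computation'' route is the one the paper carries out, whereas your alternative of restricting the bracket and invoking Proposition~\ref{pronoha1} would only get you back to an operator identity through Proposition~\ref{pronoha2}, which needs a non-degeneracy hypothesis you have not secured). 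The genuine gap is exactly the item you flag as the main obstacle: the preHamiltonianity of $AN$. Your fallback does not close it, and for a structural reason: the error operator $\Sigma_a$ you write down is the defect of \eqref{eqnlochambis} for the pair $(AN,B_0M)$ and already contains the form $\omega=\omega_{AN}$, which is only defined once $AN$ is known to be preHamiltonian. Likewise, the two skew-symmetry computations in the proof of Proposition~\ref{proham} use \eqref{eqvA}--\eqref{eqvAB} for the pair in question, i.e.\ they presuppose that $(AN,BN)$ is a preHamiltonian pair. So the $\Sigma$-argument measures the failure of \eqref{eqnlochambis}, not the failure of preHamiltonianity, and running it here is circular.

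What the paper does at this point is the one idea missing from your proposal, and it uses the skew-symmetry of both $H$ and $K$ in an essential way. Writing $K=CD^{-1}$ minimally and $B=E\tilde B$, $C=E\tilde C$ with $\tilde B,\tilde C$ left coprime, the two skew-symmetry relations $K^\dagger D^\dagger AG=-G^\dagger A^\dagger DK$ and $C^\dagger D=-D^\dagger C$ force $D^\dagger AG=\tilde C^\dagger X$ for some difference operator $X$. Taking the rlcm $D^\dagger Y=\tilde C^\dagger Z$ of $D^\dagger$ and $\tilde C^\dagger$ (which exist and have the right total orders because $\tilde C$ and $D$ are right coprime), one reads off from $C^\dagger D=-D^\dagger C$ that $Y$ is a left factor of $C$, hence preHamiltonian, and from $D^\dagger AG=\tilde C^\dagger X$ together with a total-order count that $AG$ is precisely the rlcm of $Y$ and $A$. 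Lemma~\ref{rlcmpreh} then delivers the preHamiltonianity of $AG$ (your $AN$). Without some version of this argument --- or another proof that $AN$ is preHamiltonian --- your proof does not go through.
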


\begin{proof}
Let $CD^{-1}$ be a minimal decomposition of $K$.
We start by writing $B^{-1}C$ as a right fraction using the Ore condition 
$BG=CK$. We only need to check that $AG$ and $BG$ are compatible preHamiltonian 
operators and that the pair $CK$ and $DK$ satisfies \eqref{eqnlochambis}. Since 
$H=(AG)(DK)^{-1}$, we will then be able to conclude using Proposition \ref{proham}. 

We are going to prove that $AG$ is preHamiltonian by making use of Lemma 
 \ref{rlcmpreh}: if two difference operators are preHamiltonians, then 
their rlcm is preHamiltonian as well. The key is to write 
$AG$ as the rlcm of two preHamiltonian operators. A priori $B$ and $C$ do not need to
be left coprime. Let us write $B=E \tilde{B}$ and $C=E \tilde{C}$, where 
$\tilde{B}$ and $\tilde{C}$ are left coprime. Since $H$ and $CD^{-1}$ are skew-symmetric, we have $K^\dagger D^\dagger AG=-G^\dagger A^\dagger DK$ and 
$C^\dagger D=-D^\dagger C$.
Therefore $D^\dagger AG=\tilde{C}^\dagger X$ and $A^\dagger DK=-\tilde{B}^\dagger X$ for some difference 
operator $X$ since we have $G^\dagger B^\dagger=K^\dagger C^\dagger$. $C$ and $D$ are right coprime, 
hence a fortiori $\tilde{C}$ and $D$ are right coprime. It follows that $D^\dagger $ and $\tilde{C}^\dagger $ are left coprime. Therefore there exist two 
difference operators $Y$ and $Z$ with $\Ord \,Y=\Ord \, \tilde{C}$ and $\Ord \, Z=\Ord \, D$ such 
that $D^\dagger Y=\tilde{C}^\dagger  Z$ is the rlcm of $D^\dagger$ and $\tilde{C}^\dagger$. From 
$C^\dagger D=-D^\dagger C$ we see that $Y$ divides $C$ on the left hence that it is 
preHamiltonian (any left factor of a preHamiltonian operator is preHamiltonian). 
$AG$ is the rlcm of $Y$ and $A$. Indeed, from $D^\dagger AG=\tilde{C}^\dagger X$ we see that 
$Y$ divides $AG$ on the left. Moreover $\Ord \,Y=\Ord \, \tilde{C}=\Ord \, G$ by definition of 
$G$, $Y$ and $\tilde{C}$. $AG$ is the rlcm of two preHamiltonian operators, 
hence it is preHamiltonian. 

The exact same argument to get $AG$ being preHamiltonian can be applied to $H+ 
\lambda CD^{-1}$ for any $\lambda \in \cK$. It amounts to replace $AG$ by $AG+ 
\lambda BG$. Therefore, we have proved that the two difference operators $AG$ 
and $BG$ form a compatible pair of preHamiltonian operators. Let us call $N$ the 
bidifference operator associated to $BG=CK$ (that is to say $\omega_{BG}(a,\bullet)=N_a$ 
for all $a \in \cF$).

Next we want to check that operators $CK$ and $DK$ satisfies \eqref{eqnlochambis}. We 
already know that $CK=BG$ is preHamiltonian, with bidifference operator $N$. 
Hence, we need to verify that for all $a \in \cF$
\begin{equation} \label{eqaux1}
(DK)_*[CKa]+(D_{CK})_a^\dagger DK+(D_{DK})_a^\dagger CK-(D_{DK})_a CK=DKN_a ,
\end{equation}
which follows from 
\begin{equation} \label{eqaux2}
C_{Ka}K=KN_a+(D_K)_aCK-K_*[CKa],
\end{equation}
where $C_a$ is the bidifference operator associated to the preHamiltonian $C$ 
(i.e. $C_a=\omega_C(a, \bullet)$ for all $a \in \cF$). Indeed (recall that $BG=CK$), we have
\begin{equation*}
\begin{split}
CKN_a&=(CK)_*[CKa]-(D_{CK})_aCK \\
           &=C_*(CKa)K-(D_{C})_{Ca}CK+CK_*(CKa)-C(D_{K})_aCK \\
           &=C(C_{Ka}K+K_*(CKa)-(D_{K})_aCK)
           \end{split}
           \end{equation*}
and we can simplify on the left by $C$ since $\cR$ is a domain. One deduces 
\eqref{eqaux1} from \eqref{eqaux2} multiplying on the left \eqref{eqaux2} by $D$ 
and using the fact that the operators $C$ and $D$ satisfy \eqref{eqnlochambis}.
\\
By Proposition \ref{proham}, we obtain that $AG$ and  $DK$ satisfy \eqref{eqnlochambis}. In other words, 
$H=(AG)(DK)^{-1}$ is a Hamiltonian operator under the assumption that it is skew-symmetric.
The same proof holds when replacing $A$ by $A+\lambda B$ for $\lambda \in \cK$ and thus $H$ and $K$ are compatible.
\end{proof}
This result is very strong.  Theorem \ref{repair} corresponds to the special case when the 
 Hamiltonian operator $K =BC$.

\begin{Ex}
Consider the Narita-Itoh-Bogayavlensky
lattice \cite{narita,itoh,Bog88} of the form
\\
$u_t=u(u_1 u_2-u_{-1} u _{-2}).$
It possesses a Nijenhuis recursion operator \cite{wang09}
\begin{eqnarray*}
&&R=u(\cS^2\!-\!1)^{-1}(\cS\!-\!\cS^{-2})(\cS^2 uu_1-u u_1 \cS^{-1}) (\cS u u_1-u u_1 \cS^{-1})^{-1} (u_1 u_2\cS^3-u u_1 )  (\cS-1)^{-1}\frac{1}{u}\\
&&\quad=AB^{-1}, \qquad B=u(\cS-1)\sum_{n=0}^3 B^{(n)}\cS^n,
\end{eqnarray*}
where
\begin{eqnarray*}
&& B^{(0)}=-u_{-3} u_{-1} u_{-4}^2+u_{-2} u_{-1}^2 u_{-4}-u_{-5} u_{-3} u_{-2} 
u_{-4}+u_{-5} u_{-2} u_{-1} u_{-4}
\\&&\qquad \quad-u_{-3} u_{-1} u u_{-4}+u_{-3} u_{-2} u_{-1} u\\
&&B^{(1)}=u_{-1} u^2 u_{-3}+u_{-4} u_{-1} u u_{-3}+u_{-2} u_{-1} 
u u_{-3}-u_{-2}^2 u_{1} u_{-3}-u_{-4} u_{-2} u_{1} u_{-3}-2 u_{-2} u
u_{1} u_{-3}\\&&\qquad\quad-u_{-2} u_{1} u_{2} u_{-3}-u_{-3}^2 u_{-2} u+u_{-2}^2 
u_{-1} u_{1}+u_{-2} u_{-1} u u_{1}+u_{-2} u_{-1} u_{1} u_{2} \\ 
&&B^{(2)}=u u_{2} 
u_{-1}^2-u_{1} u_{2}^2 u_{-1}+u_{-2} u u_{1} u_{-1}-2 u_{-2} u_{1} u_{2} 
u_{-1}+u u_{1} u_{2} u_{-1}+u u_{2} u_{3} u_{-1}+u_{-2} u 
u_{1}^2\\&&\qquad\quad+u_{-3} u_{-2} u u_{1}-u_{-2} u_{1}^2 u_{2}-u_{-3} u_{-2} u_{1} 
u_{2}-u_{-2} u_{1} u_{2} u_{3} \\ 
&&B^{(3)}=u_{1} u_{3} u^2-u_{2} u_{3}^2 
u+u_{-1} u_{1} u_{2} u-u_{-1} u_{2} u_{3} u+u_{1} u_{3} u_{4} 
u-u_{1} u_{2} u_{3} u_{4}
\end{eqnarray*}
and a rational Hamiltonian operator
$$
H=u \cS^{-1}(\cS^3-1)(\cS+1)^{-1} u,
$$
which can be proved as in Example \ref{exam4} following Definition 
\ref{nonlocham}. Using the procedure described in the proof of Proposition \ref{pro5} in the next section
we show that the operator $B$ is preHamiltonian.
Since $R$ is Nijenhuis, thus $A$ and $B$ form a preHamiltonian pair 
\cite{CMW18a}.
It is easy to verify that $RH$ is 
skew-symmetric, hence by Theorem \ref{thham} the rational operator $RH$ is a Hamiltonian operator.
\end{Ex}

\begin{The}\label{hamp}
Let $H$ and $K$ be two compatible rational Hamiltonian operators. Then there exist two compatible preHamiltonian  operators $A$ and $B$ 
such that $HK^{-1}=AB^{-1}$. 
\end{The}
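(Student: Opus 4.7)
The plan is to produce $A$ and $B$ as the numerators of two Hamiltonian decompositions of $H$ and $K$ sharing a common right denominator, and then to exploit the compatibility of $H$ and $K$ to verify the preHamiltonian pair property through Proposition \ref{pronoha2}.

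First, I would pick minimal right decompositions $H = A_0 B_0^{-1}$ and $K = C_0 D_0^{-1}$ satisfying Definition \ref{nonlocham}; in particular $A_0$ and $C_0$ are preHamiltonian and the pairs $(A_0,B_0)$, $(C_0,D_0)$ satisfy \eqref{eqnlochambis}. The right Ore property of $\cR$ applied to $B_0$ and $D_0$ provides difference operators $M,N$ with $B_0 M = D_0 N$; denote this common multiple by $G$. Setting $A = A_0 M$ and $B = C_0 N$, a direct computation using $MN^{-1} = B_0^{-1} D_0$ gives
\[
AB^{-1} = A_0 MN^{-1} C_0^{-1} = A_0 B_0^{-1} D_0 C_0^{-1} = HK^{-1},
\]
and moreover $H = AG^{-1}$, $K = BG^{-1}$, so that $H + \lambda K = (A + \lambda B)\, G^{-1}$ for every $\lambda \in \cK$.

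Next, I would verify that $A + \lambda B$ is preHamiltonian for every $\lambda \in \cK$. Since $H$ and $K$ are compatible, $H + \lambda K$ is a skew-symmetric Hamiltonian rational operator, and the bracket $\{\cdot,\cdot\}_{H + \lambda K} = \{\cdot,\cdot\}_H + \lambda \{\cdot,\cdot\}_K$ satisfies the Jacobi identity on $\cF'_G \subseteq \cF'_{B_0} \cap \cF'_{D_0}$, where it coincides with the bracket $\{\cdot,\cdot\}_{(A + \lambda B, G)}$ attached to the decomposition $(A+\lambda B)G^{-1}$. Applying Proposition \ref{pronoha2} to this skew-symmetric ratio then yields that $A + \lambda B$ is preHamiltonian; since this holds for every $\lambda$, the operators $A$ and $B$ form a preHamiltonian pair in the sense of Definition \ref{defpair}.

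The main technical obstacle is making Proposition \ref{pronoha2} genuinely applicable at the decomposition $(A + \lambda B, G)$ rather than at the minimal decomposition of $H + \lambda K$. Two points have to be settled: the non-degeneracy of the pairing $\smallint(r \cdot \delta_u f)$ on $\cF'_G$, and the closure of the bracket on $\cF'_G$. The former can be secured, if necessary, by enlarging $G$ through a further common right factor produced by the Ore property (the equalities $H = AG^{-1}$ and $K = BG^{-1}$ are preserved under such enlargements); alternatively, the Substitution Principle invoked in the discussion following Proposition \ref{pronoha1} lets one bypass the non-degeneracy by interpreting the operator identities \eqref{apre} and \eqref{eqnlochambis} as polynomial identities in $a \in \cF$. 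Once this issue is resolved, compatibility of the preHamiltonian pair $(A,B)$ follows automatically from the fact that the argument is run uniformly in $\lambda \in \cK$.
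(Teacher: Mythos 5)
Your construction of $A$ and $B$ is exactly the paper's: take minimal decompositions $H=A_0B_0^{-1}$, $K=C_0D_0^{-1}$, use the Ore property to form the common right denominator $G=B_0M=D_0N$, and set $A=A_0M$, $B=C_0N$, so that $H+\lambda K=(A+\lambda B)G^{-1}$ and $HK^{-1}=AB^{-1}$. The gap is in how you establish that $A+\lambda B$ is preHamiltonian. You apply Proposition \ref{pronoha2} to the decomposition $(A+\lambda B)G^{-1}$, but that proposition requires the bracket $\{\cdot,\cdot\}_{A+\lambda B,\,G}$ to be a Lie bracket \emph{on} $\cF'_G$ — in particular to take values in $\cF'_G$ — together with non-degeneracy of $\smallint(r\cdot\delta_u f)$ on $\cF'_G$. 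Neither is available: the Hamiltonian property of $H+\lambda K$ only guarantees closure of the bracket on $\cF'_{B_{\min,\lambda}}$ for the \emph{minimal} denominator $B_{\min,\lambda}$ of $H+\lambda K$, which left-divides $G$, so two elements of $\cF'_G$ bracket into the larger space $\cF'_{B_{\min,\lambda}}$ with no reason to land back in $\cF'_G$. You name this closure problem yourself but then only address non-degeneracy, and your proposed fix (enlarging $G$ by a further right factor) goes the wrong way: it shrinks $\cF'_G$ and the space $W=\{a:(Ga)_*=(Ga)_*^\dagger\}$, making both closure and non-degeneracy harder, not easier. The Substitution Principle does not rescue this either, since it extends identities already known on an infinite-dimensional subspace; here the identity on that subspace is what is missing.

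The underlying issue is that your argument, if valid, would show that \emph{every} right decomposition of a Hamiltonian operator has a preHamiltonian numerator. That is false: preHamiltonianness is not inherited by right multiples ($A_0$ preHamiltonian does not imply $A_0W$ preHamiltonian — see Lemma \ref{lbn}, and note that the proof of Theorem \ref{thham} has to work hard, via right least common multiples, precisely to establish such a claim for a specific $G$). The paper closes exactly this gap with one observation you are missing: for all but finitely many $\lambda$, the operators $A+\lambda B$ and $G=DM=QN$ are right coprime, so $(A+\lambda B)G^{-1}$ \emph{is} a minimal decomposition of the Hamiltonian operator $H+\lambda K$, and then $A+\lambda B$ is preHamiltonian directly from Definition \ref{nonlocham} together with the remark that minimality preserves \eqref{apre}–\eqref{eqnlochambis}; the remaining finitely many values of $\lambda$ are recovered because the preHamiltonian condition is polynomial in $\lambda$. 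You should replace your appeal to Proposition \ref{pronoha2} by this coprimality-for-generic-$\lambda$ argument.
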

\begin{proof}
Let $CD^{-1}$ (resp. $PQ^{-1}$) be a minimal presentation of $H$ (resp. $K$). 
Let $DM=QN$ be the least right 
common multiple of $D$ and $Q$ and $ \lambda \in \cK$. Then $H+ \lambda K$ 
which by hypothesis is Hamiltonian can be rewritten as $(CM+\lambda 
PN)(DM)^{-1}$. For infinitely many $\lambda$, $CM+\lambda PN$ and $DM=QN$ are 
right coprime. Hence $CM+\lambda PN$ is preHamiltonian for infinitely many 
constants $\lambda \in \cK$. Notice that  $HK^{-1}=(CM)(PN)^{-1}$. We conclude the proof letting $A=CM$ and $B=PN$.
\end{proof}

Combining Theorem \ref{thham} and \ref{hamp}, we are able to prove the following known statement:
\begin{Cor}
Let $H$ and $K$ be two rational compatible Hamiltonians. Define $L=HK^{-1}$. 
Then $L^nH$ is Hamiltonian for all $n \in \bbbz$.
\end{Cor}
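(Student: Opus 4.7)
The plan is to combine Theorems \ref{thham} and \ref{hamp} with an induction on $|n|$, reusing one and the same preHamiltonian pair throughout. Applying Theorem \ref{hamp} to the compatible rational Hamiltonians $H$ and $K$ produces a compatible pair of preHamiltonian operators $A, B$ with $L=HK^{-1}=AB^{-1}$.

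For $n\ge 0$, I induct on $n$; the case $n=0$ is the hypothesis. In the inductive step I write $L^nH = AB^{-1}(L^{n-1}H)$. Since $A,B$ is still a compatible preHamiltonian pair and, by the inductive hypothesis, $L^{n-1}H$ is Hamiltonian, Theorem \ref{thham} yields that $L^nH$ is Hamiltonian as soon as it is skew-symmetric. The latter is a direct computation: from $H^\dagger=-H$ and $K^\dagger=-K$ one obtains $(K^{-1}H)^\dagger=HK^{-1}$, and by associativity $L^nH=(HK^{-1})^nH=H(K^{-1}H)^n$, whence
\[
(L^nH)^\dagger = \bigl((K^{-1}H)^n\bigr)^\dagger H^\dagger = (HK^{-1})^n(-H) = -L^nH .
\]

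For $n<0$ the same induction runs with $L^{-1}=BA^{-1}$ in place of $L$: compatibility of a preHamiltonian pair is symmetric in its two operators by Definition \ref{defpair}, so $B,A$ is again a compatible preHamiltonian pair, and the skew-symmetry of $L^nH$ follows from the analogous calculation after exchanging the roles of $H$ and $K$.

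The argument is essentially a bookkeeping reduction to Theorem \ref{thham}; the one point that makes everything work smoothly is the observation that the preHamiltonian pair $A,B$ produced by Theorem \ref{hamp} can be reused unchanged at every step of the induction, so no new factorisation has to be extracted for each successive power of $L$.
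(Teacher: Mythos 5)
Your proof is correct and follows exactly the route the paper intends: the paper gives no written proof, merely stating that the corollary follows by "combining Theorems \ref{thham} and \ref{hamp}", and your induction with the explicit skew-symmetry computation $(L^nH)^\dagger=-L^nH$ is precisely that combination spelled out. The only point worth a footnote is that reversing the pair to $B,A$ for negative $n$ relies on reading Definition \ref{defpair} symmetrically (i.e.\ all nontrivial combinations $\mu A+\lambda B$ preHamiltonian), which is how the paper uses it throughout.
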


\section{An application to Hamiltonian integrable Equations}\label{Sec5}
In our recent paper \cite{CMW18a} we constructed a recursion operator for the 
Adler-Postnikov equation
\cite{Adler2}
\begin{equation}\label{bv}
 u_t=u^2(u_2 u_1-u_{-1} u_{-2})-u (u_1-u_{-1}):=c
\end{equation}
using its (rational) Lax
representation. In this section, we show that it is a Hamiltonian system. We start by introducing some 
relevant basic definitions for differential-difference equations.

Thus there is a bijection between evolutionary equations
\begin{equation}\label{evol}
 u_t=a,\qquad a\in\cF
\end{equation}
and evolutionary derivations of $\cF$. With equation (\ref{evol}) we associate 
the vector field $X_a$.

\iffalse
 We consider evolutionary differential-difference 
systems of the form 
\begin{equation}\label{evol}
 u_t=a(u_p,\ldots,u_q),\hspace{2 mm} p\le q,\ \ p,q\in\bbbz, a \in \cF.
\end{equation}
These systems are clearly in bijection with the space of evolutionary derivations of 
$\cF$.
\fi

\begin{Def} An evolutionary vector field with characteristic $b \in\cF$ is a symmetry of  the system
(\ref{evol}) if and only if 
 $[b,\ a]=0.$
\end{Def}

The space of symmetries of an equation forms a Lie algebra. The existence 
of an infinite dimensional commutative Lie algebra of symmetries is a 
characteristic property of an integrable equation and it can be taken as a 
definition of integrability.

Often the symmetries of integrable equations can be generated by recursion 
operators
\cite{mr58:25341}. Roughly speaking, a recursion operator
is a linear operator $R: \cF\rightarrow \cF$ mapping a symmetry to a new 
symmetry.  For an evolutionary equation 
(\ref{evol}), it satisfies the following equation in $\cQ$
\begin{equation}\label{reopev}
R_t=R_*[a]=[ a_*,\ R] .
\end{equation}
It was shown in \cite{Carp2017} for the differential  case and in \cite{CMW18a} 
for the difference case that a necessary condition for a rational operator $R$ 
to generate an infinite dimensional commutative Lie algebra of symmetries is to 
admit a decomposition $R=AB^{-1}$ with $A$ and $B$ compatible preHamiltonian. It 
follows then that $R$ is Nijenhuis, and in particular that $R$  is also a
recursion operator for each of the evolutionary equations in the hierarchy 
$u_t=R^k(a)$, where $k=0,1,2,\ldots \ .$ An alternative method for proving the 
locality of the hierarchy generated by a Nijenhuis operator is given in 
\cite{wang09}.
\begin{Def}
An evolutionary equation (\ref{evol}) is said to be a hamiltonian equation if 
there exists a Hamiltonian operator $H$
 and a hamiltonian functional $\int f \in \cF'$ such that
$u_t=a=H \delta_{u} \int f. $
\end{Def}
This is the same as to say that the evolutionary vector field $a$ is a 
hamiltonian vector field and thus the Hamiltonian operator
is invariant along it, that is, 
\begin{equation}\label{hameq}
 H_t=H_*[a]=a_* H+H a_*^{\dagger} ,
\end{equation}
which follows immediately from equation (\ref{lochbis}) and the fact that for $b \in \cF$, 
$b_*={b_*}^{\dagger}$ if and only if $b$ is a variational derivative.

We now recall some relevant results for the equation (\ref{bv}) in \cite{CMW18a}.  The equation (\ref{bv}) possesses a recursion operator:
\begin{eqnarray}
&&R=u\left(u (\cS^2\!-\!\cS^{-1}\!) u +\cS^{-1}\!\!\!-\!1\right)(\cS 
u\!-\!u\cS^{-1}\!)^{-1}\left( u (\cS\!-\!\cS^{-2}\!)u 
(\cS^2+\cS+1)+\cS^2\!-\!\cS \right)(\cS^2\!-\!1)^{-1}\! \frac{1}{u}\nonumber\\
&&\quad +u(2 \cS^{-1} u-\cS^{-2} u-\cS u+u-u\cS) 
(\cS^2+\cS+1)(\cS^2-1)^{-1}\frac{1}{u}. \label{readler1}
\end{eqnarray}
The rational operator $R$ can be factorised as 
$R=AB^{-1}$ where the operators $A$ and $B$ form a preHamiltonian pair. We have proved the following statement:

\begin{quote}
There exists $d^{(n)} \in \cF$, $n \geq 1$ such that  $c^{(n+1)}=B(d^{(n+1)})=A(d^{(n)}) \in \mathrm{K}$ for all $n$ and 
$[c^{(n)},\ c^{(m)}]=0$ for all $n, m \geq 1$.
\end{quote}
In particular, for the equation (\ref{bv}), we have $c=c^{(1)}=B(d^{(1)})$, where $d^{(1)}=\frac{u_{-1}uu_1w_{-1}ww_1}{\alpha \gamma}$
and $c^{(n)}=R^{n-1}c$.

In what follows, we show that the system \eqref{bv} is hamiltonian.
Let $H$ be the following skew-symmetric rational operator
\begin{equation} \label{ham}
\begin{split}
H&=u^2u_1u_2^2\cS^2-\cS^{-2}u^2u_1u_2^2+\cS^{-1}uu_1(u+u_1)-uu_1(u+u_1)\cS \\
&+u(1-\cS^{-1})(1-uu_1)({\cS}u-u\cS^{-1})^{-1}(1-uu_1)(\cS-1)u.
\end{split}
\end{equation}
Note that the equation (\ref{bv}) can be written in the form $u_t=H(\frac{\delta 
(\ln u)}{\delta u})$. We are going to prove that $H$ is a Hamiltonian operator.

The operator (\ref{ham}) can be represented in the factorised form 
$H=CG^{-1}$, where $C,G$ are difference operators. Indeed, it is easy to verify 
that 
\[
 (1-uu_1)(\cS-1)u G=({\cS}u-u\cS^{-1})E 
\]
where 
\begin{eqnarray*}
&& G=u_{1} v_{2} (u_{2} v_{1} - u_{1} v_{3}) \cS - (u^2 v_{2} v_{-1} - 
u_{1} 
u_{-1} v_{1} v)  + 
 u_{-1} v_{-1} (u_{-2} v - u_{-1} v_{-2}) \cS^{-1},\\
&& E=v_{2} v u (u_{-1} v_{1} - u v_{-1})  + v_{2} v u_{1}(u_{2} 
v_{1} - u_{1} v_{3})\cS
\end{eqnarray*}
and $v=1-u_{-1}u$. Thus $H=CG^{-1}$, where
\[
C=(u^2u_1u_2^2\cS^2-\cS^{-2}u^2u_1u_2^2+\cS^{-1}
uu_1(u+u_1)-uu_1(u+u_1)\cS)G+u(1-\cS^{-1})(1-uu_1)E.
\]
We have $C=  C^{(3)}\cS^3+\cdots + C^{(-3)}\cS^{-3}$ where
\[
 C^{(3)}=u ^2 u_1  u_2 ^2 u_3  (1 - u_3  u_4 ) ( u_4 -u_3   - u_2  u_3  u_4  
+ u_3  u_4  u_5 ).
\]

\begin{Pro}\label{pro5}
 The operator $H$ given by (\ref{ham}) is a Hamiltonian operator.
\end{Pro}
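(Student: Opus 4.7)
The plan is to verify the two conditions of Definition \ref{nonlocham} for the factorisation $H=CG^{-1}$ exhibited just before the statement. The skew-symmetry of $H$ is immediate from its explicit form: the "local" part $u^2u_1u_2^2\cS^2-\cS^{-2}u^2u_1u_2^2+\cS^{-1}uu_1(u+u_1)-uu_1(u+u_1)\cS$ is manifestly of the shape $\sum_i(h^{(i)}\cS^i-\cS^{-i}h^{(i)})$, and the tail has the form $-P^{\dagger}T^{-1}P$ where $P=(1-uu_1)(\cS-1)u$ and $T=\cS u-u\cS^{-1}$ is skew-symmetric, so $T^{-1}$ is skew-symmetric as well. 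At the level of the factors this translates into $C^{\dagger}G+G^{\dagger}C=0$.

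What remains to be checked is (i) that $C$ is preHamiltonian, i.e.\ there is a bidifference operator $\omega_C$ with $A_*[Ca]-(D_C)_aC=C\,\omega_C(a,\bullet)$ for every $a\in\cF$; and (ii) that the pair $(C,G)$ satisfies the compatibility identity \eqref{eqnlochambis}. For (i) I would compute the bidifference operator $\Phi_a:=C_*[Ca]-(D_C)_aC$ and perform a right Euclidean division by $C$ in $\cR$; since the leading coefficient $C^{(3)}=u^2u_1u_2^2u_3(1-u_3u_4)(u_4-u_3-u_2u_3u_4+u_3u_4u_5)$ is nonzero and the dependence of $\Phi_a$ on $a$ is linear and differential, this division produces a candidate $\omega_C(a,\bullet)$, and the preHamiltonian property is then equivalent to the vanishing of the remainder.

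For (ii) I would substitute the $\omega_C$ found in (i) into \eqref{eqnlochambis}: both sides are bidifference operators, and the identity reduces to a finite collection of polynomial equalities in the $\{u_k\}$. As an alternative route that avoids computing $\omega_C$ explicitly, Proposition \ref{pnlham} allows one to replace both checks by the single identity $H_*[Ha]-(D_H)_aH=H(D_H)_a^{\dagger}$, verified only for $a$ in the image of $G$; the pair $(C,G)$ being already a minimal decomposition (which itself requires a short coprimeness check of $C$ and $G$), the conclusion then follows.

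The main obstacle is purely computational: the operator $C$ has seven nonzero coefficients of orders $\cS^{-3}$ through $\cS^3$, each a polynomial in up to a dozen shifted variables $u_k$, so the expressions involved in checking either the preHamiltonian property for $C$ or the operator identity \eqref{eqnlochambis} are large. The argument is nonetheless a finite verification of polynomial identities in $\cF[\cS,\cS^{-1}]$, realistically carried out with computer algebra; no conceptual hurdle appears, because once $H$ is written as $CG^{-1}$ with $C^{\dagger}G=-G^{\dagger}C$, Definition \ref{nonlocham} is tailor-made for exactly this kind of factored verification.
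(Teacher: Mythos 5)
Your proposal is correct and follows essentially the same route as the paper: the paper also works with the factorisation $H=CG^{-1}$, establishes that $C$ is preHamiltonian by solving the (triangular, overdetermined) linear system for the coefficients $\omega_{n,m}$ of $\omega_C$ — which is the same computation as your division-with-remainder by the left factor $C$ — and then verifies the compatibility identity \eqref{eqnlochambis} for the pair $(C,G)$ by direct substitution. The only caveats are cosmetic: the quotient $\omega_C(a,\bullet)$ must be extracted with $C$ as a \emph{left} factor (since the defining relation is $C_*[Ca]-(D_C)_aC=C\,\omega_C(a,\bullet)$), and your ``$A_*[Ca]$'' should read ``$C_*[Ca]$''.
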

\begin{proof}
We prove the statement by a direct computation. First we need to show that 
$C$ is a preHamiltonian difference operator. Namely, we need to prove the 
existence of the form 
$\omega_C(a,b)=\sum_{n>m}\omega_{n,m}(\cS^n(a)\cS^m(b)-\cS^n(b)\cS^m(a)),\ 
\omega_{n,m}\in \cF$ satisfying the equation
\begin{equation}
 \label{Cpre}
 C(\omega_C(a,b))=C_*[C(a)](b)-C_*[C(b)](a),\qquad \forall a,b\in\cF
\end{equation}
and find its entries $\omega_{n,m}$ explicitly. The order of the operator 
$C$ is $(-3,3)$. The right hand side of (\ref{Cpre}) is a difference operator of 
order $(-8,8)$ acting on $a$ (same for $b$). Thus $\omega(a,b)$ should be a 
difference operator of order $(-5,5)$ acting on $a$ (same for $b$). The equation 
(\ref{Cpre}) represents the over-determined system of $50$ linear difference 
equations on $20$ non-zero entries $\omega_{n,m}$. We  order this system 
of equations according to the lexicographic ordering for products of variables 
$a_i=\cS^i (a),b_j=\cS^j(b)$, namely $a_i b_j>a_n b_m$ if $i>n$ or  if $i=n$ and 
$j>m$. In this ordering of the basis the equations on $\omega_{n,m}$ have a 
triangular form and can be solved consequently. The highest equation, 
corresponding to $a_8b_3$, is of the form
\[
  C^{(3)}\cS^3(\omega_{5,0})=u^2 u_1  u_2 ^2 u_3 ^2 u_4  (-1 + u_3  u_4 ) 
u_5 ^2 u_6  u_7 ^2 u_
  8  (-1 + u_8  u_9 ) (u_8  - u_9  + u_7  u_8  u_9  - u_8  u_9  u_{10} ).
\]
Thus 
\[
 \omega_{5,0}=\frac{u u_1 u_2^2 u_3 u_4^2 u_5 (1-u_5u_6) (u_4 u_5 
u_6-u_5 u_6 u_7 +u_5-u_6)}{u_{-1} u_1 u-u_1 u_2 u+u-u_1}.
\]
Consequently we can find all twenty nonzero entries 
$\omega_{n,m}=-\omega_{m,n}$ where $$n>m, \qquad \ -5\le m\le 0\le n\le 5,\qquad 
\ 5\le n-m\le n$$
and check the consistency of the system (\ref{Cpre}). In order to complete the 
proof we need to verify the identity
\[
 G_*[C a]-(D_G)_a C+(D_G)_a^{\dagger} C+(D_C)_a^{\dagger} G=G 
(\omega_C(a,\bullet)) \text{  for all } a \in \cF,
\]
which we have done done by a direct substitution.
\end{proof}

\begin{The}
Let $K=RH$. Then $K$ is skew-symmetric and hence $K$ is a Hamiltonian operator.
\end{The}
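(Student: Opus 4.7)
The plan is to apply Theorem~\ref{thham} to the preHamiltonian pair $A,B$ (with $R=AB^{-1}$, as established in \cite{CMW18a}) together with the rational Hamiltonian operator $H$ of Proposition~\ref{pro5}. That theorem asserts that $AB^{-1}H$ is automatically Hamiltonian and compatible with $H$ whenever it is skew-symmetric, so the whole task collapses to verifying that $K=RH$ satisfies $K^\dagger=-K$. Once that is done the Hamiltonianity (and the compatibility with $H$) comes for free.

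To set up the verification I would use the factorisations $R=AB^{-1}$ and $H=CG^{-1}$ already fixed in Section~\ref{Sec5} and turn $B^{-1}C$ into a right fraction $UV^{-1}$ via the right Ore property, i.e.\ pick difference operators $U,V\in\cR$ with $BU=CV$. Then
\[
K=RH=AU\,(GV)^{-1},
\]
and $K^\dagger=-K$ becomes the single operator identity
\[
(AU)^\dagger(GV)+(GV)^\dagger(AU)=0
\]
in the ring $\cR$. This is a finite collection of polynomial relations among the generators $u_j$, which I would verify by direct expansion after computing $U$ and $V$ from $BU=CV$. Equivalently, using $H^\dagger=-H$, skew-symmetry of $RH$ is the intertwining relation $RH=HR^\dagger$, which may be checked by clearing denominators on both sides; this formulation has the advantage that $R^\dagger$ can be read off directly from $R=AB^{-1}$ as $(B^\dagger)^{-1}A^\dagger$.

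The main obstacle is bookkeeping rather than conceptual: both $R$ (with its composite rational tail involving $(\cS u-u\cS^{-1})^{-1}$ and $(\cS^2-1)^{-1}$) and $H$ (with the rational part built from $(\cS u-u\cS^{-1})^{-1}$) carry coefficients depending on a wide window of shifts of $u$, so $AU$ and $GV$ are of substantial total order and the resulting polynomial identity is large. However the checking is mechanical and of the same flavour as the verification of \eqref{eqnlochambis} already carried out in Proposition~\ref{pro5}. Once $K^\dagger=-K$ is established, Theorem~\ref{thham} immediately yields that $K$ is Hamiltonian and compatible with $H$; by the Corollary following Theorem~\ref{hamp} this then extends to the full family $R^n H$, $n\in\bbbz$, of pairwise compatible rational Hamiltonian operators for the Adler--Postnikov equation~\eqref{bv}, as announced in the introduction.
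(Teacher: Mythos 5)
Your overall architecture coincides with the paper's: both reduce the theorem to the single claim $K^\dagger=-K$ and then invoke Theorem~\ref{thham} with the preHamiltonian pair $A,B$ from $R=AB^{-1}$ and the Hamiltonian operator $H$ of Proposition~\ref{pro5}. Where you diverge is in how skew-symmetry is established, and this is where essentially all of the mathematical content lives. You propose to clear denominators via the Ore property ($BU=CV$, $K=AU(GV)^{-1}$) and verify $(AU)^\dagger GV+(GV)^\dagger AU=0$ by brute-force expansion. That is in principle a valid finite check, but as written it is a deferral rather than a proof: you have not performed the computation, and it is a heavy one, since the Ore completion of $B^{-1}C$ and the products $AU$, $GV$ have large total order and coefficients of high polynomial degree in many shifts of $u$. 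The paper instead avoids the computation entirely by a structural argument: from $R_*[c]=c_*R-Rc_*$ (recursion) and $H_*[c]=c_*H+Hc_*^\dagger$ (Hamiltonian invariance) it deduces that $L=K+K^\dagger$ satisfies $L_*[c]=c_*L+Lc_*^\dagger$; it then splits $K$ into homogeneous components $K^{(0)}+K^{(2)}+\cdots+K^{(8)}$, disposes of the extreme components $K^{(0)}=R^{(-1)}H^{(1)}$ and $K^{(8)}=R^{(3)}H^{(5)}$ by identifying them with skew-symmetric objects attached to the Volterra and Bogoyavlensky flows, and kills the middle components of $L$ by an order-versus-degree count for homogeneous solutions of the invariance equation along $u_t=u^2(u_1u_2-u_{-1}u_{-2})$ and $u_t=u(u_1-u_{-1})$, concluding $L=0$. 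So: same reduction, genuinely different key step. Your route buys conceptual simplicity at the price of an unexecuted (and substantial) symbolic verification; the paper's route buys a humanly checkable argument at the price of the homogeneity and order bookkeeping. To make your version a complete proof you would have to actually exhibit $U,V$ and carry out the expansion (or cite a computer-algebra verification), exactly as the paper does for the identity \eqref{eqnlochambis} in Proposition~\ref{pro5}. Your closing remark that the family $R^nH$ then follows is consistent with the paper's subsequent theorems.
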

\begin{proof}
$R$ is a recursion operator for $u_t=c$, which means that 
\begin{equation} \label{rec}
R_*[c]=c_*R-Rc_*.
\end{equation}
$H$ is a Hamiltonian operator and $u_t=c$ is Hamiltonian for $H$ with density $\ln u$, which means that
\begin{equation*} 
H_*[c]=c_*H+H{c_*}^{\dagger}.
\end{equation*}
It is immediate that 
\begin{equation*}
K_*[c]=c_*K+K{c_*}^{\dagger}.
\end{equation*}
Let $L=K+K^{\dagger}$. We want to check that $L=0$. We have
\begin{equation*} 
L_*[c]=c_*L+L{c_*}^{\dagger}.
\end{equation*}
If we consider the degree of $u$ and its shifts, we can write $K=K^{(0)}+K^{(2)}+K^{(4)}+K^{(6)}+K^{(8)}$. Moreover,
$K^{(0)}=R^{(-1)}H^{(1)}$ and $K^{(8)}=R^{(3)}H^{(5)}$ are obviously skew-symmetric since $R^{(-1)}$ (resp. $R^{(3)}$) is recursion for $u_t=u(u_1-u_{-1})$ (resp. $u_t=u^2(u_1u_2-u_{-1}u_{-2})$) and $u_t=u(u_1-u_{-1})$ (resp. $u_t=u^2(u_1u_2-u_{-1}u_{-2})$) is hamiltonian for $H^{(1)}$ (resp. $H^{(5)}$). Therefore we can write $L=L^{(2)}+L^{(4)}+L^{(6)}$. 
Let $a=u^2(u_1u_2-u_{-1}u_{-2})$. Then
\begin{equation*} 
{L^{(6)}}_*[a]=a_*L^{(6)}+L^{(6)}{a_*}^{\dagger}.
\end{equation*}
If $P$ is a Laurent series in $S^{-1}$ such that its coefficients are homogeneous of degree $n$ and
\begin{equation*} 
P_*[a]=a_*P+P{a_*}^{\dagger}.
\end{equation*}
it is straightforward looking at the leading term of $P$ to see that $n=3k+2$ for some integer $k$. In that case, the order of $P$ is $2k$. Therefore $L^{(6)}=L^{(4)}=0$ and $L=L^{(2)}$, with $L$ of order $0$. But we must also have 
\begin{equation*} 
L_*[b]=b_*L+L{b_*}^{\dagger}.
\end{equation*}
where $b=u(u_1-u_{-1})$. But the only solution to this equation is $L=u(S-S^{-1})u$ which has order $1$. Therefore $L=0$.
We know that $R=AB^{-1}$ and $A, B$ form a preHamiltonian pair. It follows from Theorem \ref{thham} that the operator $K=RH$ is Hamiltonian.
\end{proof}

\begin{The}
Let $\phi \in \cK[X,X^{-1}]$. Then $\phi(R)H$ is a Hamiltonian operator.
\end{The}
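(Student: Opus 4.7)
The plan is to show that $\{R^n H : n \in \bbbz\}$ is a family of pairwise compatible rational Hamiltonian operators. Once this is established, the theorem follows immediately, since $\phi(R) H = \sum_n a_n R^n H$ is then a finite linear combination of pairwise compatible Hamiltonians, which is itself Hamiltonian (the Jacobi identity for the bracket induced by the sum is multilinear in the operator, so compatibility of all pairs ensures its validity).

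First, I would verify that $R^n H$ is Hamiltonian for every $n \in \bbbz$. The operator $H$ is Hamiltonian by Proposition \ref{pro5}, and by the previous theorem $K := RH$ is also Hamiltonian. Applying Theorem \ref{thham} to the preHamiltonian pair $(A, B)$ coming from the factorisation $R = AB^{-1}$ and to the Hamiltonian $K = H$ shows that $RH$ is compatible with $H$. Invoking the Corollary at the end of Section \ref{sec32} with $L = H K^{-1} = R^{-1}$ then yields that $L^n H = R^{-n} H$ is Hamiltonian for every $n \in \bbbz$.

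Next, I would establish pairwise compatibility. The key observation is that $R$ preserves compatibility of Hamiltonian operators: if $K_1, K_2$ are compatible Hamiltonians, then $K_1 + \lambda K_2$ is Hamiltonian for every $\lambda \in \cK$, and Theorem \ref{thham} applied to the preHamiltonian pair $(A, B)$ with $K = K_1 + \lambda K_2$ shows that $R(K_1 + \lambda K_2) = R K_1 + \lambda R K_2$ is Hamiltonian for every $\lambda$, which is precisely compatibility of $R K_1$ and $R K_2$. The same argument applies to $R^{-1}$ via the preHamiltonian pair $(B, A)$, and to $R + \mu$ via $(A + \mu B, B)$. Starting from the compatible pair $(H, RH)$ and iterating this property yields compatibility of every adjacent pair $(R^k H, R^{k+1} H)$.

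The main obstacle is bootstrapping from adjacent compatibility to full pairwise compatibility of $\{R^n H\}$. I would address this using Theorem \ref{hamp}: once $R^k H$ and $R^{k+1} H$ are known to be compatible Hamiltonians, the ratio $R = R^{k+1} H \cdot (R^k H)^{-1}$ admits additional preHamiltonian pair representations, which when combined with $(A, B)$ and iterated applications of Theorem \ref{thham} propagate compatibility across non-adjacent pairs. An alternative route, closer to classical bi-Hamiltonian theory, is to exploit the Nijenhuis property of $R$ established in \cite{CMW18a} via a Magri-Lenard-type argument adapted to the present algebraic framework. Once the pairwise compatibility of the family $\{R^n H\}_{n \in \bbbz}$ is secured, the theorem follows.
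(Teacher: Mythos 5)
There is a genuine gap. Your strategy hinges on first proving that the whole family $\{R^nH\}_{n\in\bbbz}$ is \emph{pairwise} compatible, and you explicitly identify the passage from adjacent compatibility to full pairwise compatibility as ``the main obstacle'' --- but you never close it: the two routes you sketch (extracting extra preHamiltonian representations from Theorem \ref{hamp}, or a Magri--Lenard argument via the Nijenhuis property) are not carried out, and neither is routine in this framework. A second, quieter gap is the opening claim that a finite linear combination of pairwise compatible Hamiltonians is ``immediately'' Hamiltonian. For genuine Poisson bivectors this follows from bilinearity and symmetry of the Schouten bracket, but here ``Hamiltonian'' is Definition \ref{nonlocham}: an existence statement about a decomposition $AB^{-1}$ satisfying \eqref{apre} and \eqref{eqnlochambis}, with the forms $\omega_A$ depending nonlinearly on the decomposition and the bracket only defined on a domain $\cF'_B$ that changes with the denominator. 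Summing several rational operators requires passing to a common denominator and re-deriving these identities; this needs an argument, not an appeal to multilinearity.

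The paper's proof sidesteps all of this and is worth comparing against. Since $R$ generates the hierarchy of \eqref{bv}, so does $\phi(R)$ for any Laurent polynomial $\phi$; by Theorem 5 of \cite{CMW18a} a minimal decomposition $\phi(R)=\tilde A\tilde B^{-1}$ therefore automatically has $\tilde A,\tilde B$ a compatible preHamiltonian pair. Skew-symmetry of $\phi(R)H$ follows from $RH=HR^{\dagger}$ (established in the preceding theorem), which gives $R^nH=H(R^{\dagger})^n$ and hence $\phi(R)H=H\,\phi(R)^{\dagger}$. A single application of Theorem \ref{thham} with $K=H$ then yields the result. In other words, the correct move is to treat $\phi(R)$ as one rational operator with its own preHamiltonian-pair decomposition, rather than to decompose $\phi(R)H$ as a sum of the operators $R^nH$ and fight the compatibility of the whole family. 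If you want to salvage your approach, you would need to actually prove pairwise compatibility (e.g.\ by applying Theorem \ref{thham} to $\phi(R)=\lambda^{-1}(R^{n}+\lambda R^{m})$ for each pair $n,m$, which again reduces to the paper's observation about $\phi(R)$) and then justify the summation step within Definition \ref{nonlocham} --- at which point you will have reproduced the paper's argument with extra steps.
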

\begin{proof}
We know from \cite{CMW18a} that $R$ generates the hierarchy of  \eqref{bv}. therefore $\phi(R)$ also generates an arbitrary large set of commuting flows. From Theorem $5$ in \cite{CMW18a} we see that a minimal decomposition of $\phi(R)$ must come from a pair of compatible preHamiltonian operators. Moreover, $\phi(R)H$ is skew-symmetric. We conclude with Theorem $4$.
\end{proof}

\begin{The}
Every element in the hierarchy $u_{t_n}=c^{(n)}$ is a Hamiltonian system with respect to the Hamiltonian operator $H$.
\end{The}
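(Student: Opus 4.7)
The plan is to proceed by induction on $n$, constructing at each step a functional $f_n \in \cF'$ such that $c^{(n)} = H \delta_u f_n$. The base case $n=1$ is given by the display before Proposition \ref{pro5}, which rewrites \eqref{bv} as $u_t = H \delta_u \ln u$, so $f_1 = \ln u$ works.

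For the inductive step, I would start from $c^{(n)} = H \delta_u f_n$ and apply the recursion operator to obtain
$$c^{(n+1)} = R c^{(n)} = RH\, \delta_u f_n.$$
By the previous theorem applied with $\phi(X) = 1 + \lambda X$, the operator $H + \lambda RH$ is Hamiltonian for every $\lambda \in \cK$, so $H$ and $RH$ form a compatible pair of rational Hamiltonian operators. Since both are skew-symmetric, the identities $(RH)^\dagger = -RH$ and $H^\dagger = -H$ yield $H R^\dagger = RH$, that is, $R^\dagger = H^{-1} RH$, and therefore $c^{(n+1)} = H(R^\dagger \delta_u f_n)$. The induction then closes once $f_{n+1}$ is produced with $\delta_u f_{n+1} = R^\dagger \delta_u f_n$, or equivalently once I show that $R^\dagger$ sends variational derivatives to variational derivatives.

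This last point is the Magri--Lenard closure property for the compatible bi-Hamiltonian pair $(H, RH)$, and it is where the real work lies. I would establish it by writing $H$ and $RH$ in minimal form and applying the defining identity \eqref{eqnlochambis} for each of them at $a = \delta_u f_n$, exploiting the self-adjointness $a_* = a_*^\dagger$ that characterises a variational derivative. Combining the two resulting identities through the compatibility of the pencil $H + \lambda RH$, and clearing denominators via the Substitution Principle recalled after Proposition \ref{pronoha1}, one arrives at the self-adjointness of the Fr\'echet derivative of $R^\dagger \delta_u f_n$, whence $R^\dagger \delta_u f_n = \delta_u f_{n+1}$ for some $f_{n+1} \in \cF'$. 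The main obstacle will be managing the non-local factors such as $(\cS u - u \cS^{-1})^{-1}$ appearing in both $R$ and $H$: their presence forces the entire argument to be phrased in terms of the correspondence convention for rational operators from Section \ref{Sec2} rather than as genuine compositions in $\cF$.
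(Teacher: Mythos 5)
Your skeleton (induction on $n$, the identity $RH=HR^{\dagger}$ from skew-symmetry, and the compatibility of $H$ and $RH$ obtained from the preceding theorem with $\phi(X)=1+\lambda X$) matches the paper's, but you route the inductive step through the one claim that is genuinely hard here and leave it unproved: that $R^{\dagger}$ sends variational derivatives to variational derivatives. In this framework that claim splits into two separate obligations, neither of which your sketch discharges. First, $R^{\dagger}$ is a rational operator, so ``$R^{\dagger}\delta_u f_n$'' is only a correspondence; you must show that $\delta_u f_n$ actually lies in its domain, equivalently that $c^{(n+1)}$ lies in the image of a reduced numerator $C$ of $H=CG^{-1}$. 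This is not a matter of ``managing the non-local factors'' --- it is the exactness half of the Lenard scheme, and the paper settles it by a concrete order count: since $b=RH(\delta_u f_n)$ with $RH=HR^{\dagger}$ and a reduced denominator of $RH$ has total order $6$, the element $b$ must lie in $\im C$. Second, the closure half --- self-adjointness of the Fr\'echet derivative of the preimage --- is asserted to follow from ``combining the two identities \eqref{eqnlochambis} through the compatibility of the pencil,'' but no such derivation is given, and none exists in the paper to cite. The classical Lenard closure lemma (Olver, Lemma 7.25) additionally requires nondegeneracy and \emph{two} consecutive closed forms in the chain, which creates a further issue at the base $n=1$ of your induction; transplanting its proof to rational difference operators satisfying Definition~\ref{nonlocham} would be a substantial piece of work in its own right.

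The paper avoids the closure lemma entirely. Having established $b=c^{(n+1)}\in\im C$, it invokes the remark surrounding \eqref{hameq}: for $b$ in the image of a reduced numerator of $H$, the statement that $u_t=b$ is Hamiltonian for $H$ is \emph{equivalent} to the operator identity $H_*[b]=b_*H+Hb_*^{\dagger}$, because $g_*=g_*^{\dagger}$ characterises variational derivatives and \eqref{lochbis} does the rest. That invariance identity is then obtained by transfer: $b$ is manifestly Hamiltonian for the rational Hamiltonian operator $R^{n-1}H$ (whose Hamiltonian property comes from the $\phi(R)H$ theorem), so the invariance of $R^{n-1}H$ along $b$ holds, and the recursion-operator equation $R_*[b]=[b_*,R]$ from \eqref{reopev} peels off the factor $R^{n-1}$ to yield the invariance of $H$ along $b$. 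If you wish to keep your route you must prove both missing steps; the shorter repair is to replace ``$R^{\dagger}$ preserves variational derivatives'' by this equivalence between exactness and invariance, which is precisely where the membership $b\in\im C$ earns its keep.
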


\begin{proof}
We proceed by induction on $n \geq 1$. We know that the first equation can be written as $u_t=H(1/u)$. Let us assume that the first $n$ equation are hamiltonian for $H$. Let $u_t=a$ be the $n$-th equation of the hierarchy and $u_t=b$ be the $(n+1)$-th equation. By the induction hypothesis, $a=H(f)$ for some variational derivative $f$. Moreover, $b=R(f)$. Since $RH=HR^{\dagger}$ and the total order of a reduced denominator for $RH$ is $6$, $b$ is in the image of a reduced numerator for $H$. As was noticed at the end of section $4$, since $b$ is in the image of a reduced numerator of $H$, it is equivalent to say that $u_t=b$ is hamiltonian for $H$ and to say that 
\begin{equation}
H_*[b]=b_* H+H{b_*}^{\dagger}.
\end{equation}
Since $R^{n-1}$ is recursion for $u_t=b$, the previous equation is equivalent to
\begin{equation}
(R^{n-1}H)_*[b]=b_* R^{n-1}H+R^{n-1}H{b_*}^{\dagger}.
\end{equation}
But this holds thanks to the same principle: $R^{n-1}H$ is a Hamiltonian rational operator and $u_t=g=R^{n-1}c$ is hamiltonian for $R^{n-1}H$. 
\end{proof}

\begin{Rem}
For instance, the second equation of the hierarchy is 
\begin{equation}
u_t=H(u_1u_2+u_{-1}u_1+u_{-1}u_{-2}-1)=H \delta_u (uu_1u_2-u).
\end{equation}
 Moreover, it is easy to check that $R^{\dagger}(1/u)=u_1u_2+u_{-1}u_1+u_{-1}u_{-2}-1$.
\end{Rem}

\section{Concluding remarks}
In this paper, we have developed the theory of Poisson brackets, Hamiltonian 
 rational  operators and difference
preHamiltonian operators associated with a difference field  $(\cK,\{u\},\cS)$, 
where $\cK$ is a zero characteristic ground field of constants, $\{u\}=\{\ldots 
u_{-1},u=u_0,u_1,\ldots\}$ is a sequence of a single ``dependent'' variable and 
$\cS$ is the sift  automorphism of infinite order such that $\cS: \, 
u_k\mapsto u_{k+1}$. This formalism is suitable for the description of scalar 
Hamiltonian dynamical systems.

  It can be extended to the case of several 
dependent variables, i.e. the case when ${\bf u}=(u^1,\ldots ,u^N)$ is a 
vector. 
Some of the definitions concerning the algebra of difference and rational 
operators in the vector case were presented in \cite{CMW18a}. The majority of 
the definitions and results of the current paper can be extended to the vector 
case.  This includes the crucial Propositions  \ref{pronoha2} and \ref{pronoha1}, Theorems 
\ref{thham} and \ref{hamp} as well as Definitions $1$ to $8$. Rational matrix 
difference operators consist of ratios $AB^{-1}$ where $B$ is a regular matrix 
difference operator, that is not a zero-divisor. The fact that the ring of matrix difference operators
is not a domain leads to technical difficulties when trying to extend some of the scalar case results.
In particular, a generalisation of Theorem \ref{charham} to the matrix 
case is not straightforward. 

We want to stress that not all integrable systems of differential-difference 
equations are bi-Hamiltonian. Some systems do possess an infinite hierarchy of 
commuting symmetries generated by a recursion operator which is a ratio of 
compatible preHamiltonian operators, but cannot be cast in a Hamiltonian form 
for any Hamiltonian operator. For example, let us consider the 
equation
\begin{equation}\label{ut1}
 u_{t_1}=u(u_1-u):=f^{(1)}.
\end{equation}
It can be linearised to $v_{t_1}=v_1$ using the substitution 
$u=v_1/v:=\phi$, from which we find its hierarchy of  commuting symmetries, 
corresponding to $v_{t_n}=v_n,\ n\in\bbbz$: 
\[
 u_{t_0}=0,\qquad u_{t_n}=(u_n-u)\prod_{k=0}^{n-1}u_k:=f^{(n)},\quad 
u_{t_{-n}}=\frac{u_{-n}-u}{\prod_{k=1}^{n}u_
{-k}}:=f^{(-n)},\qquad n\ge
1.
\]
The recursion 
operator for the linearised equation is $\cS$. Thus the recursion operator for  
(\ref{ut1}) is
\[
 R=\phi_* \cS\phi_*^{-1}=   u(\cS-1)u(\cS-1)^{-1}\cS\frac{1}{u}, \quad 
\phi_*=\frac{1}{v}\cS-\frac{v_1}{v^2}=u(\cS-1)\frac{1}{v}.
\] 
It generates the hierarchy of symmetries  of the system \eqref{ut1} as follows:
\[
 R^{-1}(f^{(1)})=R(f^{(-1)})=0,\quad f^{(n+1)}=R^{n}(f^{(1)}),\quad 
f^{(-n-1)}=R^{-n}(f^{(-1)}), \qquad n>0.
\]

A minimal decomposition for the recursion operator is given by $R=AB^{-1}$ with
\[
 A=u(\cS-1)u\cS,\qquad B=u(\cS-1)
\]
and difference operators $A,B$ form a preHamiltonian pair. Indeed, the operator 
$B$ is the same as in Example 3 and it is preHamiltonian with the form 
$\omega_B=0$. The difference operator $A=BQ$, where $Q=u\cS$ is invertible 
operator. Thus $A$ is 
also a preHamiltonian operator with the form $\omega_A(a,b)=u(a_1b-b_1 a)$ (see 
Remark \ref{inv}). It is easy to check that $A$ and $B$ are compatible. 

However, the system \eqref{ut1} cannot be cast into a Hamiltonian form for any 
Hamiltonian rational operator $H$. Indeed, the equation 
\begin{eqnarray*}
X_{f^{(1)}}(H)={f^{(1)}}_*H+H{f^{(1)}}_*^{\dagger}
\end{eqnarray*}
 has no 
solutions for $H\in\cQ$, since the order of ${f^{(1)}}_*$ is $(0,1)$ while 
the order of ${f^{(1)}}_*^{\dagger}$ is $(-1,0)$.

\section*{Acknowledgements}
The paper is supported by AVM's EPSRC grant EP/P012655/1 and JPW's EPSRC grant 
EP/P012698/1. Both authors gratefully acknowledge the financial support.
JPW and SC were partially supported by Research in 
Pairs grant no. 41670 from the London Mathematical Society; SC also thanks the 
University 
of Kent for the hospitality received during his visit in July 2017. SC was supported by a Junior 
Fellow award from the Simons Foundation.

\end{document}